\documentclass[12pt]{article}

\usepackage[margin=1in]{geometry}
\usepackage{amsmath,amssymb,amsthm}
\usepackage{mathtools}
\usepackage{booktabs}
\usepackage{array}
\usepackage{tabularx}
\usepackage{setspace}
\usepackage{parskip}
\usepackage{xcolor}
\usepackage{bm}
\usepackage{multirow}
\usepackage{natbib}
\usepackage{hyperref}
\usepackage{doi}
\usepackage{cleveref}
\usepackage{enumitem}
\usepackage{algorithm}
\usepackage{algpseudocode}
\usepackage{caption}

\newtheorem{assumption}{Assumption}[section]
\newtheorem{theorem}{Theorem}[section]
\newtheorem{remark}{Remark}[section]
\newtheorem{proposition}{Proposition}[section]
\newtheorem{lemma}{Lemma}[section]
\newtheorem{corollary}{Corollary}[section]
\newtheorem{definition}{Definition}[section]

\newcommand{\EE}{\mathbb{E}}
\newcommand{\oP}{o_P}

\newcommand{\iid}{\overset{\mathrm{i.i.d.}}{\sim}}
\newcommand{\Ltwo}{L^2(P_0)}
\newcommand{\Psitrue}{\Psi(P_0)}
\newcommand{\Psihat}{\hat{\Psi}}
\newcommand{\Dstar}{D^{*}}

\newcommand{\EICj}{\mathrm{EIC}_j}

\newcommand{\calP}{\mathcal{P}}

\newcommand{\norm}[1]{\left\|#1\right\|_{P_0}}
\newcommand{\Rtwo}{R_2}                       
\newcommand{\Rrem}{R_{\mathrm{rem}}}          
\newcommand{\VarSand}{\widehat{\mathrm{Var}}_{\mathrm{Sand}}}
\newcommand{\VarJK}{\widehat{\mathrm{Var}}_{\mathrm{JK}}}
\newcommand{\VarBoot}{\widehat{\mathrm{Var}}_{\mathrm{Boot}}}
\newcommand{\VarHC}{\widehat{\mathrm{Var}}_{\mathrm{HC}}}
\newcommand{\sigEIF}{\sigma^2_{\mathrm{EIF}}}

\newcommand{\CIJK}{\mathrm{CI}^{\mathrm{JK}}_{1-\alpha}}

\newcommand{\QY}{\bar{Q}_Y}

\newcommand{\QYhat}{\hat{\bar{Q}}_Y}

\newcommand{\gDhat}{\hat{g}_\Delta}
\newcommand{\PsiCF}{\hat{\Psi}_{\mathrm{CF}}}

\title{%
  Refined Inference for Asymptotically Linear Estimators\\
  with Non-Negligible Second-Order Remainders%
}
\author{%
  \textit{Amanda Lin Li}%
  \thanks{Correspondence to the authors.}%
}
\date{\today}

\begin{document}
\maketitle

\begin{abstract}
Asymptotically linear estimators in semiparametric models are usually studied through a von Mises expansion in which first-order inference is based on the influence-function variance. This reduction is valid only when the second-order remainder is negligible not only in probability but also in variance, a requirement not implied by the usual product-rate conditions ensuring asymptotic linearity. We study the regime in which the remainder contributes variance at order $n^{-1}$, so that the total sampling variance differs from the standard influence-function approximation by a non-vanishing first-order term.

We derive a finite-sample variance decomposition separating the influence-function variance, the remainder variance, and their covariance, and characterize sandwich validity through the vanishing of scaled remainder variance: under a negligible cross term, the sandwich estimator is consistent for the total sampling variance when $n\,\mathrm{Var}(R_{\mathrm{rem}})\to 0$ and materially underestimates it in the complementary near-boundary regime $n\,\mathrm{Var}(R_{\mathrm{rem}})\to c_R>0$. We then establish asymptotic validity of two refined procedures in the near-boundary regime: the leave-one-out jackknife and the pairs bootstrap. Jackknife validity is obtained through a self-normalization argument; bootstrap validity is established directly under a Mallows--2 condition. We also extend the theory to clustered data and derive an analytic expression showing how intra-cluster correlation amplifies the sandwich gap through the remainder term. Simulations illustrate the regime and confirm the predicted coverage behaviour of the competing variance estimators.

\noindent\textbf{Keywords:} asymptotic linearity; semiparametric inference;
second-order remainder; jackknife; bootstrap; variance decomposition;
clustered data.
\end{abstract}

\newpage

\section{Introduction}
\label{sec:intro}
The bookkeeping step is doing more work than it appears. In the strong-decay regime, where $n\,\mathrm{Var}(\Rrem)\to 0$, the first-order linearization is sufficient both for centering and for first-order variance estimation. The sandwich variance, by contrast, is a statement about the \emph{spread}: it captures the variance of the influence-function average and treats the remainder as if it contributed nothing further. These are not the same condition. A remainder that is negligible for centering can still be visible at the variance scale.

The gap matters in a regime that is increasingly common in modern semiparametric practice. When nuisance estimators are flexible, such as adaptive nonparametric regressions, cross-fitted machine learners, or sieve estimators \citep{chernozhukov2018double}, their convergence rate often hovers near the $n^{-1/4}$ threshold that the product-rate condition allows. The remainder is then right at the boundary of what asymptotic linearity tolerates: small enough that the point estimate is essentially unbiased, large enough that its variance contribution does not disappear as $n$ grows. The sandwich formula \citep{white1980heteroskedasticity} misses this contribution by construction. The estimator looks unbiased, the standard error looks reasonable, and the Wald intervals quietly undercover. We call this the \emph{near-boundary regime}.

Empirical traces of this behaviour have appeared in several specific contexts. Influence-function-based semiparametric estimators can exhibit pronounced finite-sample variance underestimation when nuisance estimation is data-adaptive \citep{vdl2011targeted}, and bias-reduced doubly robust estimators were developed in part to address related instability in the sampling distribution \citep{vermeulen2015bias}. What has been missing is a treatment at the level of the asymptotically linear framework itself: a single accounting that says when the sandwich variance is right, when it is wrong, and what to do in the latter case.

This paper provides that accounting. We give a finite-sample decomposition of the sampling variance of an asymptotically linear estimator into three pieces: the influence-function component, a remainder-variance component, and a cross term. From this decomposition we obtain a sharp criterion. Under a negligible cross term, the sandwich variance is consistent for the total sampling variance when the remainder variance, scaled by $n$, tends to zero; in the complementary near-boundary regime, it misses a first-order contribution. The classical product-rate condition is enough for point estimation, but it is strictly weaker than what variance inference requires.

When the sharp condition fails, two familiar resampling tools recover consistent variance inference under mild additional conditions: the leave-one-out jackknife and the pairs bootstrap. The jackknife works through a self-normalization mechanism that absorbs the remainder contribution automatically; the bootstrap works under a Mallows-distance condition that is satisfied in the settings of practical interest \citep{bickel1981some, cheng2014moment}. For clustered data we derive an analytic expression showing how intra-cluster correlation amplifies the gap between the sandwich and the total variance, and we verify the predictions in simulations of a cross-fitted nested-bridge estimator for stepped-wedge cluster-randomized trials.

A note on the running example. The empirical illustrations come from a surrogate-assisted nested-bridge estimator developed in a companion paper \citep{li2026core}. We use it because it cleanly exhibits the near-boundary regime and because its cluster structure makes the intra-cluster amplification visible. The theoretical results are stated and proved for the general asymptotically linear framework; the example is an illustration of the theory, not its source.

The remainder of the paper is organized as follows. Section~\ref{sec:setup} introduces the general framework and regularity conditions. Sections~\ref{sec:decomp} and~\ref{sec:sand} develop the variance decomposition and the sharp characterization of sandwich validity. Section~\ref{sec:refinements} establishes the jackknife and bootstrap results. Section~\ref{sec:cluster} extends the theory to clustered data. Section~\ref{sec:sim} reports simulations illustrating the regime and the associated coverage behaviour.

\section{General Framework}
\label{sec:setup}

\subsection{Asymptotically Linear Estimators and the Von Mises Expansion}

Let $\mathcal{P}$ be a nonparametric model and $\Psi: \calP \to \mathbb{R}$ a
pathwise differentiable functional with efficient influence function $\Dstar
\in L^2_0(P_0)$ \citep{bickel1993efficient, newey1994asymptotic}.  An estimator $\Psihat$ based on $n$ i.i.d.\ observations
$O_1,\ldots,O_n \sim P_0$ is asymptotically linear if it admits the expansion
\begin{equation}
  \Psihat - \Psitrue
  = \frac{1}{n}\sum_{i=1}^n \Dstar(O_i) + \Rrem,
  \label{eq:ale}
\end{equation}
where $\Rrem = \Rrem({\hat{P}}, P_0)$ is a second-order remainder satisfying
$\sqrt{n}\,\Rrem = \oP(1)$ under appropriate regularity conditions.  The framework covers, among others, augmented inverse probability weighted estimators \citep{robins1994estimation} and related semiparametric estimators based on influence-function corrections.  The key
structural property of \eqref{eq:ale} is that the remainder $\Rrem$ is
\emph{second-order} in the nuisance estimation error: it vanishes faster than
the first-order influence-function term $n^{-1/2}$ under standard product-rate
conditions.

\begin{definition}[Standard von Mises Remainder Structure]
\label{def:remainder}
We say $\Rrem$ has \emph{standard bilinear structure} if it can be written as
\begin{equation}
  \Rrem
  = \int \bigl(\hat{\eta}_1 - \eta^0_1\bigr)
         \bigl(\hat{\eta}_2 - \eta^0_2\bigr)\,dP_0
  + \oP\bigl(n^{-1/2}\bigr),
  \label{eq:bilinear}
\end{equation}
where $\eta_1, \eta_2$ are nuisance functionals and $\hat{\eta}_k$ are their
estimators.  In the running nested-bridge example, the remainder $\Rtwo$ satisfies~\eqref{eq:bilinear} with
$\eta_1 = \QY$ and $\eta_2 = g_\Delta$.  By the Cauchy--Schwarz inequality,
$|\Rrem| \le \norm{\hat{\eta}_1 - \eta^0_1} \cdot \norm{\hat{\eta}_2 - \eta^0_2}$,
so condition $\norm{\hat{\eta}_1 - \eta^0_1} \cdot \norm{\hat{\eta}_2 - \eta^0_2}
= \oP(n^{-1/2})$ suffices for $\sqrt{n}\,\Rrem = \oP(1)$.
\end{definition}

\begin{definition}[Product-Rate Boundary]
\label{def:boundary}
We say the estimator operates at the \emph{product-rate boundary} if each
nuisance estimator converges at rate exactly $\oP(n^{-1/4})$ in $\Ltwo$:
\begin{equation}
  \norm{\hat{\eta}_k - \eta^0_k} = O_P(n^{-1/4}), \quad k=1,2.
  \label{eq:boundary}
\end{equation}
At the boundary, $\norm{\hat{\eta}_1 - \eta^0_1} \cdot
\norm{\hat{\eta}_2 - \eta^0_2} = O_P(n^{-1/2})$, which just barely satisfies
the product-rate condition $\oP(n^{-1/2})$.  The boundary is typical when
nuisance functions are estimated via parametric models in dimension growing
with $n$, or via nonparametric ensemble methods that have not yet entered
their fast-rate regime.
\end{definition}

\subsection{Regularity Conditions}

We work under the following conditions throughout Sections~\ref{sec:decomp}--\ref{sec:sand}.

\begin{assumption}[i.i.d.\ Data]
\label{ass:iid}
$O_1, \ldots, O_n \iid P_0$ with $O_i \in \mathcal{O}$.
\end{assumption}

\begin{assumption}[First-Order von Mises Expansion]
\label{ass:ale}
The estimator $\Psihat$ satisfies \eqref{eq:ale} with $\Rrem = O_P(n^{-1/2})$
and $\Dstar \in L^2_0(P_0)$ having $\sigEIF = \EE_{P_0}[(\Dstar)^2] \in
(0,\infty)$.
\end{assumption}

\begin{remark}[Terminology]
\label{rem:ale_term}
For brevity, we continue to refer to estimators satisfying
Assumption~\ref{ass:ale} as asymptotically linear estimators (ALEs).
In the strong-decay regime this coincides with the classical usage
$\sqrt{n}\,\Rrem = o_P(1)$; in the near-boundary regime the same first-order
expansion remains the starting point for inference even though the remainder
may still be visible at the variance scale.
\end{remark}

\begin{assumption}[Bounded Influence Function]
\label{ass:bounded}
$\|\Dstar\|_\infty \le M < \infty$ almost surely.
\end{assumption}

\begin{assumption}[Remainder Rate]
\label{ass:rate}
The remainder satisfies $\Rrem = r_n \cdot \xi_n$ where $r_n$ is a
deterministic rate and $\xi_n = O_P(1)$.  We say $\Rrem$ is at the
\emph{strong decay regime} if $n \cdot r_n^2 \to 0$, and at the
\emph{near-boundary regime} if $n \cdot r_n^2 = c_R > 0$ for some constant.
\end{assumption}

\section{Finite-Sample Variance Decomposition}
\label{sec:decomp}

\subsection{The Decomposition Theorem}

\begin{theorem}[Finite-Sample Variance Decomposition]
\label{thm:decomp}
Under Assumptions~\ref{ass:iid}--\ref{ass:rate}, the sampling variance of
the estimator decomposes as:
\begin{equation}
  \mathrm{Var}\bigl(\Psihat - \Psitrue\bigr)
  \;=\;
  \underbrace{\frac{\sigEIF}{n}}_{\substack{\text{influence-function}\\\text{variance}}}
  \;+\;
  \underbrace{\mathrm{Var}(\Rrem)}_{\substack{\text{remainder}\\\text{variance}}}
  \;+\;
  \underbrace{2\,\mathrm{Cov}\!\left(\frac{1}{n}\sum_{i=1}^n \Dstar(O_i),\;\Rrem\right)}_{\text{cross term}}.
  \label{eq:decomp}
\end{equation}
The sandwich variance estimator $\VarSand(\Psihat)$ targets only the
influence-function component $\sigEIF/n$ and does not by itself account for
$\mathrm{Var}(\Rrem)$ or the cross term.
\end{theorem}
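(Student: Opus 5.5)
The identity \eqref{eq:decomp} is the variance of a sum applied to the expansion \eqref{eq:ale}, so the plan is to justify finiteness of all moments involved and then expand. Write $\bar D_n = n^{-1}\sum_{i=1}^n \Dstar(O_i)$, so that $\Psihat - \Psitrue = \bar D_n + \Rrem$ holds exactly, as an algebraic identity defining $\Rrem$. By Assumption~\ref{ass:iid} the summands are i.i.d., and by Assumption~\ref{ass:ale} each has mean zero (since $\Dstar \in L^2_0(P_0)$) and variance $\sigEIF$. Hence $\mathrm{Var}(\bar D_n) = \sigEIF/n$: the cross terms $\mathrm{Cov}(\Dstar(O_i),\Dstar(O_j))$ vanish for $i\neq j$ by independence. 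Assumption~\ref{ass:bounded} moreover gives $|\bar D_n|\le M$, so $\bar D_n$ has all moments.

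The second step is to expand the variance of the sum by bilinearity of covariance:
\[
\mathrm{Var}(\bar D_n + \Rrem) = \mathrm{Var}(\bar D_n) + \mathrm{Var}(\Rrem) + 2\,\mathrm{Cov}(\bar D_n,\Rrem),
\]
and then substitute $\mathrm{Var}(\bar D_n)=\sigEIF/n$. This yields \eqref{eq:decomp}. No asymptotics enter; Assumption~\ref{ass:rate} is needed only later, to interpret the sizes of the last two terms.

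The main obstacle is integrability. Assumptions~\ref{ass:ale} and~\ref{ass:rate} control $\Rrem$ only in probability, so $\mathrm{Var}(\Rrem)$ might be infinite. I would handle this by reading the theorem as an identity in $[0,\infty]$, under the standing convention that $\mathrm{Var}(\Psihat)<\infty$, which is implicit whenever one speaks of the sampling variance. The argument then runs as follows. If $\EE[\Psihat^2]<\infty$, then $\Rrem = (\Psihat-\Psitrue)-\bar D_n$ is square-integrable, because $\bar D_n$ is bounded. Cauchy--Schwarz then gives $|\mathrm{Cov}(\bar D_n,\Rrem)|\le (\sigEIF/n)^{1/2}\mathrm{Var}(\Rrem)^{1/2}<\infty$, so every term in \eqref{eq:decomp} is finite and the expansion is legitimate. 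Conversely, if $\mathrm{Var}(\Psihat)=\infty$, then $\mathrm{Var}(\Rrem)=\infty$ as well, again because $\bar D_n$ is bounded, and both sides are infinite. I would state this finite-second-moment convention explicitly in the proof.

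The final claim is about what the sandwich estimator targets. I would take $\VarSand = n^{-2}\sum_i \hat D(O_i)^2$, where $\hat D$ is the estimated influence function. Its population target is $n^{-1}\EE[(\Dstar)^2] = \sigEIF/n$, because the estimator is built only from the linear term of \eqref{eq:ale}. By construction it therefore contains no estimate of $\mathrm{Var}(\Rrem)$ or of the cross term. This part is definitional and needs no further argument beyond noting that $\hat D\to\Dstar$ in $\Ltwo$ makes $n\VarSand\to\sigEIF$ in probability.
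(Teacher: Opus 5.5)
Your proposal is correct and follows the paper's proof. Both arguments expand $\mathrm{Var}(\bar D_n + \Rrem)$ by bilinearity, evaluate $\mathrm{Var}(\bar D_n)=\sigEIF/n$ by independence, and read the sandwich claim off the fact that the estimator is built from the linear term alone. Your only addition is the integrability check: since $|\bar D_n|\le M$, $\Rrem$ is square-integrable whenever $\Psihat$ is. The paper leaves this implicit, and it is a worthwhile clarification because Assumptions~\ref{ass:ale} and~\ref{ass:rate} control $\Rrem$ only in probability.
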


\begin{proof}
Apply the variance operator to \eqref{eq:ale}:
\begin{align*}
  \mathrm{Var}\bigl(\Psihat - \Psitrue\bigr)
  &= \mathrm{Var}\!\left(\frac{1}{n}\sum_i \Dstar(O_i)\right)
     + \mathrm{Var}(\Rrem)
     + 2\,\mathrm{Cov}\!\left(\frac{1}{n}\sum_i\Dstar(O_i), \Rrem\right).
\end{align*}
The first term equals $\sigEIF/n$ by independence and the definition of
$\sigEIF$. The concluding statement about the sandwich follows from the fact
that its empirical second moment is built from the influence-function term
alone.
\end{proof}

\subsection{The Near-Boundary Gap}
\label{sec:decomp:boundary}

\begin{proposition}[Variance Gap at the Product-Rate Boundary]
\label{prop:gap}
Suppose the remainder has standard bilinear structure (Definition~\ref{def:remainder})
with each nuisance at the product-rate boundary (Definition~\ref{def:boundary}).
Then:
\begin{enumerate}[label=\alph*.]
  \item $\mathrm{Var}(\Rrem) = O(n^{-1})$, so the gap
    $\mathrm{Var}(\Psihat) - \sigEIF/n = \Theta(n^{-1})$.
  \item The relative gap satisfies
    \begin{equation}
      \frac{\mathrm{Var}(\Psihat)}{\sigEIF/n}
      \;=\; 1 + \frac{c_R}{\sigEIF} + o(1),
      \label{eq:ratio}
    \end{equation}
    where $c_R = \lim_{n\to\infty} n \cdot \mathrm{Var}(\Rrem) > 0$.
  \item The ratio \eqref{eq:ratio} is asymptotically \emph{constant}: it does
    not converge to 1 as $n\to\infty$ unless $c_R = 0$.
\end{enumerate}
\end{proposition}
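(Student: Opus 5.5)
The plan is to read everything off the decomposition in Theorem~\ref{thm:decomp}, dividing by $\sigEIF/n$. Two inputs are needed: the order of $\mathrm{Var}(\Rrem)$, and control of the cross term.

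\textbf{Order of the remainder variance (part a).} The bilinear structure \eqref{eq:bilinear} and Cauchy--Schwarz give
\[
|\Rrem| \le \norm{\hat{\eta}_1-\eta^0_1}\,\norm{\hat{\eta}_2-\eta^0_2} + \oP(n^{-1/2}).
\]
At the boundary \eqref{eq:boundary} this is $O_P(n^{-1/2})$. That is only a stochastic bound, so moving to $\mathrm{Var}(\Rrem)=O(n^{-1})$ needs uniform integrability of $n\Rrem^2$. I would obtain it from Assumption~\ref{ass:rate}: write $\Rrem=r_n\xi_n$ with $nr_n^2=c_R$, and read $\xi_n=O_P(1)$ together with bounded second moments, meaning $\sup_n\EE\xi_n^2<\infty$. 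Then $n\,\mathrm{Var}(\Rrem)\le nr_n^2\,\EE\xi_n^2=O(1)$.

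\textbf{Ratio (part b) and the cross term.} Dividing \eqref{eq:decomp} by $\sigEIF/n$ gives
\[
\frac{\mathrm{Var}(\Psihat)}{\sigEIF/n} = 1 + \frac{n\,\mathrm{Var}(\Rrem)}{\sigEIF} + \frac{2n\,\mathrm{Cov}(\bar D,\Rrem)}{\sigEIF},
\qquad \bar D = n^{-1}\sum_i \Dstar(O_i).
\]
The middle term tends to $c_R/\sigEIF$, because $c_R$ is defined as $\lim_n n\,\mathrm{Var}(\Rrem)$; I take the existence of this limit as part of the hypothesis. The main obstacle is the cross term. Cauchy--Schwarz bounds it only by $2\sqrt{c_R}/\sqrt{\sigEIF}=O(1)$, not $o(1)$. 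I would therefore impose the negligible-cross-term condition that the abstract mentions, $n\,\mathrm{Cov}(\bar D,\Rrem)\to0$, and justify it as follows. The product $(\hat\eta_1-\eta_1^0)(\hat\eta_2-\eta_2^0)$ depends on the data only through the nuisance fits. Under cross-fitting, its $P_0$-integral on each fold is a functional of the training folds, so it is independent of the held-out part of $\bar D$. Its covariance with the training-fold part of $\bar D$ is second order by a Hoeffding/Efron--Stein projection argument: the linear projection of $\Rrem$ onto $\sum_i\Dstar(O_i)$ is of smaller order than $\sqrt{\mathrm{Var}(\Rrem)}$. If this argument does not go through in full generality, I would state the cross-term condition as an explicit hypothesis, consistent with the phrase ``under a negligible cross term'' in the abstract. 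With the cross term gone, \eqref{eq:ratio} follows.

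\textbf{Parts a (the $\Theta$ claim) and c.} The gap equals $\mathrm{Var}(\Rrem)+2\,\mathrm{Cov}(\bar D,\Rrem)$, which is $n^{-1}\bigl(c_R+o(1)\bigr)$. Since $c_R>0$, this is $\Theta(n^{-1})$ and not merely $O(n^{-1})$, which completes part a. For part c, \eqref{eq:ratio} shows that the ratio converges to $1+c_R/\sigEIF$. This limit is a constant independent of $n$, and it equals $1$ if and only if $c_R=0$. In particular it does not tend to $1$ in the near-boundary regime.
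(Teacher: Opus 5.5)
Your proposal follows the same route as the paper's proof: Cauchy--Schwarz on the bilinear remainder for part a, substitution into \eqref{eq:decomp} for part b, and part c as immediate. It is, however, more careful than the paper at the two points where the paper's argument is loose.

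First, the paper passes directly from $\Rrem^2 = O_P(n^{-1})$ to $\EE[\Rrem^2] = O(n^{-1})$, which requires uniform integrability. Your strengthening of Assumption~\ref{ass:rate} to $\sup_n \EE\xi_n^2 < \infty$ is exactly what makes that step valid.

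Second, the paper bounds the cross term by Cauchy--Schwarz as $O(n^{-1})$ and then drops it from the ratio. As you observe, after dividing by $\sigEIF/n$ this bound only gives $\bigl|2n\,\mathrm{Cov}(\bar D_n,\Rrem)\bigr|/\sigEIF \le 2\sqrt{c_R/\sigEIF}+o(1)$. That is an $O(1)$ term of the same order as $c_R/\sigEIF$, and it can have either sign. So neither \eqref{eq:ratio} nor the lower bound in the $\Theta(n^{-1})$ claim of part a follows from the stated hypotheses. The condition $n\,\mathrm{Cov}(\bar D_n,\Rrem)\to 0$ is genuinely needed, and you are right to add it. The paper itself imposes it in Theorem~\ref{thm:sand} and concedes in Remark~\ref{rem:cross} that it can fail.

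The one part to discard is your attempted derivation of that condition from cross-fitting; state it as an explicit hypothesis instead. Cross-fitting makes the fold-$k$ remainder independent only of the fold-$k$ observations. It is still a function of the other folds' observations, which also enter $\bar D_n$.

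Your claim that the linear projection of $\Rrem$ onto $\sum_i \Dstar(O_i)$ is of smaller order than $\sqrt{\mathrm{Var}(\Rrem)}$ also fails in exactly the case that matters. Suppose the order-$n^{-1/2}$ fluctuation of $\Rrem$ comes from a linear term $n^{-1}\sum_i\psi(O_i)$, which is the structure envisaged by the linearization in Assumption~\ref{ass:smooth}. Then that projection has standard deviation of order $|\mathrm{Cov}(\Dstar(O),\psi(O))|/\sqrt{n}$, the same order as $\sqrt{c_R/n}$. Consequently $n\,\mathrm{Cov}(\bar D_n,\Rrem)\to\mathrm{Cov}(\Dstar(O),\psi(O))$, which is generically nonzero.

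Take as explicit assumptions $\sup_n\EE\xi_n^2<\infty$, the existence of $c_R=\lim_n n\,\mathrm{Var}(\Rrem)>0$, and $n\,\mathrm{Cov}(\bar D_n,\Rrem)\to 0$. With these, the rest of your argument is complete and correct, including the $\Theta(n^{-1})$ lower bound, which the paper's proof never actually establishes.
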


\begin{proof}
For part~(a): by the bilinear structure~\eqref{eq:bilinear},
$\Rrem^2 \le \|\hat{\eta}_1 - \eta^0_1\|^2_{P_0} \cdot \|\hat{\eta}_2 -
\eta^0_2\|^2_{P_0}$.  At the boundary, each factor is $O_P(n^{-1/2})$, so
$\Rrem^2 = O_P(n^{-1})$ and hence $\mathrm{Var}(\Rrem) \le \EE[\Rrem^2] =
O(n^{-1})$.  For part~(b): substitute into \eqref{eq:decomp}; the cross term
is $2\,\mathrm{Cov}(\bar{D}_n, \Rrem) \le 2(\sigEIF/n)^{1/2}
\mathrm{Var}(\Rrem)^{1/2} = O(n^{-1})$, so the ratio is dominated by
$1 + n \cdot \mathrm{Var}(\Rrem)/\sigEIF \to 1 + c_R/\sigEIF$.  Part~(c) is
immediate.
\end{proof}

\begin{remark}[Constancy as a Diagnostic]
\label{rem:constancy}
Proposition~\ref{prop:gap}(c) has a simple consequence: if the empirical
variance ratio $\VarSand / \widehat{\mathrm{Var}}_{\mathrm{emp}}$ remains approximately
constant across increasing sample sizes, the estimator is operating at or near
the product-rate boundary.  If the ratio instead decays toward 1, the nuisance
estimators are entering a faster-rate regime and the sandwich approximation is
recovering.  The point is not estimator-specific; it follows directly from the
variance expansion.  In the running example, the ratio
is $\approx 0.63$ (equivalently, $\VarSand$ is $\approx 63\%$ of the empirical
variance) and is constant across $J \in \{10, 100\}$, confirming near-boundary
operation.
\end{remark}

\begin{remark}[When the Cross Term Is Non-Negligible]
\label{rem:cross}
The cross term $2\,\mathrm{Cov}(\bar{D}_n, \Rrem)$ can be positive or
negative depending on the correlation structure between the influence-function
term and the remainder.  In the cross-fitted running example,
the out-of-fold construction makes $\bar{D}_n$ and $\Rrem$ approximately
independent, so the cross term is $\oP(n^{-1})$ and the dominant missing
component is $\mathrm{Var}(\Rrem)$ alone.  Without cross-fitting, the cross
term can be substantial and of either sign.
\end{remark}

\section{Sandwich Consistency: A Sharp Characterization}
\label{sec:sand}

We now give a precise statement of when the sandwich is and is not consistent
for the total variance \citep{newey1994asymptotic}.

\begin{theorem}[Sandwich Consistency Under Negligible Remainder Variance]
\label{thm:sand}
Under Assumptions~\ref{ass:iid}--\ref{ass:bounded}, suppose additionally that
$n\,\mathrm{Cov}(\bar D_n, \Rrem) \to 0$.
\begin{enumerate}[label=\roman*.]
  \item If $n\,\mathrm{Var}(\Rrem) \to 0$, then
    \[
      \frac{\VarSand(\Psihat)}{\mathrm{Var}(\Psihat-\Psitrue)}
      \xrightarrow{P} 1.
    \]
  \item If $n\,\mathrm{Var}(\Rrem) \to c_R > 0$, then the sandwich targets only
    $\sigEIF/n$ and underestimates the total variance by
    \[
      \frac{c_R}{n} + o(n^{-1}).
    \]
\end{enumerate}
\end{theorem}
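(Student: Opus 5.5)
The plan is to combine the exact decomposition of Theorem~\ref{thm:decomp} with a consistency statement for the sandwich itself, namely $n\,\VarSand(\Psihat) \xrightarrow{P} \sigEIF$. Once both are in hand, the two regimes differ only in which terms of \eqref{eq:decomp} survive after multiplying by $n$.

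\textbf{Step 1: sandwich consistency for the influence-function variance.} Write $\VarSand(\Psihat) = n^{-1}\cdot n^{-1}\sum_i \hat D(O_i)^2$, where $\hat D$ is the plug-in influence function, centered if the construction centers it. Split
\[
n^{-1}\sum_i \hat D(O_i)^2 - \sigEIF = \Bigl(n^{-1}\sum_i \Dstar(O_i)^2 - \sigEIF\Bigr) + n^{-1}\sum_i\bigl(\hat D - \Dstar\bigr)(O_i)\bigl(\hat D + \Dstar\bigr)(O_i).
\]
\begin{itemize}
\item The first bracket is $o_P(1)$ by the weak law of large numbers, using Assumptions~\ref{ass:iid} and~\ref{ass:bounded}; boundedness gives finite second moments.
\item For the second bracket, apply Cauchy--Schwarz. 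This requires an empirical $L^2$ consistency condition $n^{-1}\sum_i(\hat D-\Dstar)^2(O_i) = o_P(1)$. With cross-fitting this follows from $\|\hat D - \Dstar\|_{P_0} = o_P(1)$ by conditioning on the training folds.
\item Boundedness then controls $n^{-1}\sum_i(\hat D+\Dstar)^2(O_i) = O_P(1)$.
\end{itemize}
Assumptions~\ref{ass:iid}--\ref{ass:bounded} do not literally contain the consistency condition, so I will state it as the implicit meaning of ``sandwich estimator''. This is the main obstacle: making precise what $\VarSand$ is and confirming it is consistent for $\sigEIF/n$ in ratio form.

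\textbf{Step 2: the total variance at scale $n$.} Multiply \eqref{eq:decomp} by $n$:
\[
n\,\mathrm{Var}(\Psihat-\Psitrue) = \sigEIF + n\,\mathrm{Var}(\Rrem) + 2n\,\mathrm{Cov}(\bar D_n,\Rrem).
\]
The last term tends to $0$ by the hypothesis on the cross term.
\begin{itemize}
\item In case (i), $n\,\mathrm{Var}(\Psihat-\Psitrue) \to \sigEIF$. Because $\sigEIF \in (0,\infty)$, this limit is bounded away from zero, so the ratio is well defined.
\item In case (ii), $n\,\mathrm{Var}(\Psihat-\Psitrue) \to \sigEIF + c_R$.
\end{itemize}

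\textbf{Step 3: conclude.} For (i), write
\[
\frac{\VarSand}{\mathrm{Var}(\Psihat-\Psitrue)} = \frac{n\VarSand}{n\,\mathrm{Var}(\Psihat-\Psitrue)}.
\]
The numerator tends to $\sigEIF$ in probability by Step~1, and the denominator is a deterministic sequence converging to $\sigEIF$ by Step~2. The continuous mapping theorem then gives the ratio $\to 1$ in probability.

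For (ii), Step~1 shows the sandwich targets $\sigEIF/n$; more precisely, $\VarSand = \sigEIF/n + o_P(n^{-1})$. Subtracting this from the exact decomposition gives
\[
\mathrm{Var}(\Psihat-\Psitrue) - \VarSand = \mathrm{Var}(\Rrem) + 2\,\mathrm{Cov}(\bar D_n,\Rrem) + o_P(n^{-1}).
\]
By the two hypotheses this equals $c_R/n + o(n^{-1})$, with the $o$ understood in probability. Equivalently, the ratio converges to $\sigEIF/(\sigEIF+c_R) < 1$.

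Assumption~\ref{ass:rate} is not needed beyond interpreting $c_R$. I will also remark that finiteness of $\mathrm{Var}(\Rrem)$ is implicit in both hypotheses.
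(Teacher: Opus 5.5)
Your proposal is correct and follows the paper's proof. In both, you multiply the decomposition of Theorem~\ref{thm:decomp} by $n$, remove the cross term using the hypothesis, and compare the result with the sandwich's target $\sigEIF/n$ in each regime. The only difference is that you explicitly flag and sketch the step $n\,\VarSand(\Psihat)\xrightarrow{P}\sigEIF$, whereas the paper simply asserts it; that step needs an empirical $L^2$ consistency condition on the estimated influence function, which is not contained in Assumptions~\ref{ass:iid}--\ref{ass:bounded}.
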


\begin{proof}
If $n\,\mathrm{Var}(\Rrem) \to 0$ and $n\,\mathrm{Cov}(\bar D_n,\Rrem) \to 0$,
then Theorem~\ref{thm:decomp} gives
\[
  \mathrm{Var}(\Psihat-\Psitrue)
  = \frac{\sigEIF}{n} + o(n^{-1}).
\]
Since $\VarSand(\Psihat) \xrightarrow{P} \sigEIF/n$, the ratio converges to one.

If instead $n\,\mathrm{Var}(\Rrem) \to c_R > 0$ and
$n\,\mathrm{Cov}(\bar D_n,\Rrem) \to 0$, then Theorem~\ref{thm:decomp} gives
\[
  \mathrm{Var}(\Psihat-\Psitrue)
  = \frac{\sigEIF}{n} + \frac{c_R}{n} + o(n^{-1}),
\]
whereas the sandwich still targets only $\sigEIF/n$. The claimed first-order
underestimation follows immediately.
\end{proof}

\begin{corollary}[Standard Product-Rate Condition Does Not Suffice]
\label{cor:product_rate_insufficient}
The standard product-rate condition $\norm{\hat{\eta}_1 - \eta^0_1} \cdot
\norm{\hat{\eta}_2 - \eta^0_2} = \oP(n^{-1/2})$ implies $\sqrt{n}\,\Rrem =
\oP(1)$ (asymptotic linearity) but does \emph{not} imply $n \cdot
\mathrm{Var}(\Rrem) \to 0$ (sandwich consistency).  The latter requires each
nuisance to converge strictly faster than $n^{-1/4}$, e.g., at rate
$n^{-1/4-\delta}$ for some $\delta > 0$.
\end{corollary}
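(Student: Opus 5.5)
The corollary has two parts. The first is that the product-rate condition gives asymptotic linearity. The second is that it does not give $n\,\mathrm{Var}(\Rrem)\to 0$, and a strictly faster per-nuisance rate does. I will handle them in that order.

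\textbf{Implication.} Under the bilinear structure of Definition~\ref{def:remainder}, Cauchy--Schwarz gives
\[
|\Rrem| \le \norm{\hat{\eta}_1-\eta^0_1}\,\norm{\hat{\eta}_2-\eta^0_2} + \oP(n^{-1/2}).
\]
So the product-rate condition yields $\sqrt{n}\,\Rrem=\oP(1)$ directly.

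\textbf{Non-implication.} The key point is that convergence in probability controls neither second moments nor the variance at scale $n^{-1}$. I will build a counterexample in which the product is $\oP(n^{-1/2})$ but $n\,\mathrm{Var}(\Rrem)\not\to 0$.

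\begin{itemize}
\item Take a scalar $P_0$ and constant nuisances. Let $\hat\eta_1-\eta_1^0$ and $\hat\eta_2-\eta_2^0$ be driven by an auxiliary event $A_n$, independent of the sample, with $P(A_n)=p_n\to 0$.
\item On $A_n$, both errors equal $a_n$; off $A_n$, both are $0$. Then $\Rrem = a_n^2\mathbf 1_{A_n}$, which is $\oP(n^{-1/2})$ because $p_n\to0$.
\item Its variance is $a_n^4 p_n(1-p_n)$. Choosing $a_n^4 = c_R/(n p_n)$ gives $n\,\mathrm{Var}(\Rrem)\to c_R>0$.
\item Because the randomization is independent of the data, the cross term vanishes. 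So by Theorem~\ref{thm:sand}(ii) the sandwich misses $c_R/n$.
\end{itemize}

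If a less artificial example is preferred, I would instead use an estimator whose product error is exactly of order $n^{-1/2}$ with a nondegenerate limit, slowed by a vanishing factor only in probability. The event construction is the cleanest, though.

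\textbf{Sufficiency of a faster rate.} For the ``requires'' direction I would give a sufficient condition, and I read the corollary's ``requires'' in that sense. Suppose $\norm{\hat{\eta}_k-\eta^0_k}=O_P(n^{-1/4-\delta})$, strengthened to an $L^4$-moment bound $\EE\norm{\hat{\eta}_k-\eta^0_k}^4 = O(n^{-1-4\delta})$, and suppose the bilinear $\oP(n^{-1/2})$ term also satisfies $n\,\EE[\cdot^2]\to0$. Then
\[
n\,\mathrm{Var}(\Rrem)\le n\,\EE[\Rrem^2] \le 2n\,\bigl(\EE\|\cdot\|_1^4\,\EE\|\cdot\|_2^4\bigr)^{1/2}+o(1) = O(n^{-4\delta})\to0 .
\]
The step is Cauchy--Schwarz on the bilinear bound, then Cauchy--Schwarz on the product of squared norms.

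\textbf{Main obstacle.} The hardest step is making the ``requires'' claim precise. A per-nuisance rate in probability alone never controls the variance, so some moment or uniform-integrability strengthening has to be stated. Necessity in the strict sense is also false, since one nuisance could converge fast while the other converges slowly. I would therefore phrase the second half as: a sufficient condition is an $n^{-1/4-\delta}$ rate holding in $L^4$, together with uniform integrability of $n\Rrem^2$.
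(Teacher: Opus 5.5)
Your proposal is correct, but for the non-implication it takes a different route from the paper. The paper gives no separate proof of this corollary. The first half is the Cauchy--Schwarz remark already made in Definition~\ref{def:remainder}. The non-implication is read off from Proposition~\ref{prop:gap} and Theorem~\ref{thm:sand}(ii) at the product-rate boundary of Definition~\ref{def:boundary}, where each $\norm{\hat\eta_k-\eta_k^0}=O_P(n^{-1/4})$ and $c_R>0$ is posited rather than exhibited. The ``faster rate'' clause mirrors the step from $\Rrem^2=O_P(\cdot)$ to $\EE[\Rrem^2]=O(\cdot)$ in the proof of Proposition~\ref{prop:gap}(a).

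Your rare-event construction gains two things over that route. First, it genuinely satisfies the premise. At the boundary the product is only $O_P(n^{-1/2})$, which does not give $\oP(n^{-1/2})$, despite the paper's ``just barely satisfies.'' Your $\Rrem=a_n^2\mathbf{1}_{A_n}$ meets the $\oP(n^{-1/2})$ condition exactly and has zero cross term, so Theorem~\ref{thm:sand}(ii) applies cleanly. Second, it shows the obstruction is uniform integrability, not rate. Each nuisance error in your example is $\oP(n^{-s})$ for every $s>0$. So the corollary's last sentence holds only under a moment strengthening such as your $L^4$ bound, and, as you note, it is not necessary.

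You can sharpen this. The product-rate condition already gives $n\Rrem^2\to_P 0$, so uniform integrability of $n\Rrem^2$ alone yields $n\,\EE[\Rrem^2]\to 0$ by Vitali's theorem. No extra $\delta$ is needed once uniform integrability is assumed.

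The price of your route is that the failure it exhibits is purely one of variance. Since $\sqrt{n}\,\Rrem\to_P 0$ in your example, the sandwich Wald interval is still asymptotically valid there. Asymptotic undercoverage, as in Assumption~\ref{ass:clt} with $\sqrt{n}\,\Rrem$ nondegenerate, requires leaving the $\oP(n^{-1/2})$ regime altogether.

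A minor point: the last equality in your sufficiency display should read $O(n^{-4\delta})+o(1)$, not $O(n^{-4\delta})$.
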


This corollary identifies the gap between what the existing asymptotic
linearity literature establishes and what is needed for valid first-order
inference.  Asymptotic linearity guarantees point-estimation quality; it does
not guarantee that the standard confidence interval is valid in finite samples
at the product-rate boundary.

section{Three Refined Variance Estimators}
\label{sec:refinements}

Sections~\ref{sec:decomp} and~\ref{sec:sand} identify the source of the
inferential failure: when the second-order remainder contributes variance at
order $n^{-1}$, the sandwich estimator targets only the influence-function
component and therefore does not estimate the total sampling variance.  We now
develop procedures that remain valid in that regime.

Our aim is not to alter the asymptotic linear representation itself, but to
recover first-order variance inference when the remainder remains
variance-relevant.  We consider two resampling-based approaches.  The first is
the leave-one-out jackknife, whose validity is established through a
self-normalization argument.  The second is the pairs bootstrap, whose validity
is obtained directly under a Mallows--2 consistency condition.  Both procedures
target the full first-order variance, including the contribution of the
second-order remainder.

We present the jackknife first, since the leave-one-out expansion gives the
most direct view of the additional variance term.  We then turn to the
bootstrap, which requires weaker structural assumptions on the leave-one-out
perturbations.

\subsection{The Leave-One-Out Jackknife}
\label{sec:jk}

\subsubsection{Definition and Rationale}

\begin{definition}[Leave-One-Unit-Out Estimator]
\label{def:loo}
For $i = 1,\ldots,n$, let $\hat{\Psi}^{(-i)}$ denote the ALE computed on the
sample with unit $i$ removed.  The jackknife variance estimator is:
\begin{equation}
  \VarJK(\Psihat)
  \;=\; \frac{n-1}{n}\sum_{i=1}^n
  \bigl(\hat{\Psi}^{(-i)} - \bar{\Psi}_{(-\cdot)}\bigr)^2,
  \label{eq:jk}
\end{equation}
where $\bar{\Psi}_{(-\cdot)} = n^{-1}\sum_i \hat{\Psi}^{(-i)}$.
\end{definition}

\subsubsection{Leave-One-Out Expansion}

\begin{lemma}[LOO Expansion]
\label{lem:loo}
Suppose Assumptions~\ref{ass:iid}--\ref{ass:rate} hold.  Let
$\hat{\eta}_k^{(-i)}$ denote the nuisance estimator trained on
$\{O_j : j \neq i\}$.  Assume the following \emph{LOO stability condition}:
\begin{equation}
  \max_{1 \le i \le n}
  \bigl\|\hat{\eta}_k^{(-i)} - \hat{\eta}_k\bigr\|_{P_0}
  \;=\; O_P(n^{-1/2}), \quad k = 1, 2.
  \tag{S}
  \label{eq:stability}
\end{equation}
Then:
\begin{equation}
  \hat{\Psi}^{(-i)} - \Psihat
  \;=\; -\frac{1}{n}\Dstar(O_i) + \delta_i,
  \label{eq:loo_expand}
\end{equation}
where $\max_i |\delta_i| = O_P(n^{-1})$ and
$n^{-1}\sum_i \delta_i^2 = O_P(r_n^2/n)$.

\medskip\noindent
\textbf{Sufficient condition for \eqref{eq:stability}.}
A sufficient route to Condition~\eqref{eq:stability} is leave-one-out
stability of an empirical risk minimizer: when $\hat\eta_k$ solves
$\min_\eta \mathbb{P}_n \ell(\eta; O)$ over a function class $\mathcal{F}$
and the loss satisfies a uniform gradient stability bound
$\sup_\eta |(\mathbb{P}_n - \mathbb{P}_n^{(-i)})\nabla_\eta \ell(\eta;\cdot)|
= O_P(n^{-1/2})$ \citep{bousquet2002stability},
then \eqref{eq:stability} holds under standard regularity.
In the cross-fitted running example, Super Learner may satisfy this
under additional entropy and stability conditions on the library;
general verification is case-specific.
\end{lemma}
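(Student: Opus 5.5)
The plan is to apply the first-order expansion \eqref{eq:ale} twice, once on the full sample and once on the sample with unit $i$ removed, and subtract. On the reduced sample of size $n-1$, Assumption~\ref{ass:ale} gives
\[
\hat{\Psi}^{(-i)} - \Psitrue = \frac{1}{n-1}\sum_{j\neq i}\Dstar(O_j) + \Rrem^{(-i)},
\]
where $\Rrem^{(-i)} = \Rrem(\hat{P}^{(-i)},P_0)$ is built from the leave-one-out nuisances $\hat\eta_k^{(-i)}$. Subtracting the full-sample expansion gives an exact identity:
\[
\hat{\Psi}^{(-i)} - \Psihat = -\frac{1}{n}\Dstar(O_i) + \frac{1}{n(n-1)}\sum_{j\neq i}\Dstar(O_j) + \bigl(\Rrem^{(-i)} - \Rrem\bigr).
\]
This defines $\delta_i$ as the sum of the last two terms, namely $\delta_i = \frac{1}{n}\bar D_n^{(-i)} + \bigl(\Rrem^{(-i)}-\Rrem\bigr)$. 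Here $\bar D_n^{(-i)}$ is the leave-one-out mean of $\Dstar$. The algebra on the linear terms uses only $\frac{1}{n-1}-\frac1n = \frac{1}{n(n-1)}$.

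\textbf{Step 1: the linear part of $\delta_i$.} By Assumption~\ref{ass:bounded}, $|\bar D_n^{(-i)}| \le M$. The stronger bound also holds: $\bar D_n^{(-i)} = \bar D_n + O(M/n)$ uniformly in $i$, and $\bar D_n = O_P(n^{-1/2})$ by the CLT. So the first piece of $\delta_i$ is $O_P(n^{-3/2})$ uniformly. Its mean square is $O_P(n^{-3})$, which is $o(r_n^2/n)$ in both regimes as long as $r_n \gtrsim n^{-1}$. I will add this mild proviso, or note that it is automatically negligible relative to the remainder part.

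\textbf{Step 2: the remainder difference (main obstacle).} Using the bilinear structure \eqref{eq:bilinear}, write $a_k = \hat\eta_k-\eta_k^0$ and $h_k^{(i)} = \hat\eta_k^{(-i)}-\hat\eta_k$. Expanding the product gives
\[
\Rrem^{(-i)}-\Rrem = \int h_1^{(i)} a_2\,dP_0 + \int a_1 h_2^{(i)}\,dP_0 + \int h_1^{(i)}h_2^{(i)}\,dP_0 + (\text{difference of }\oP(n^{-1/2})\text{ terms}).
\]
Cauchy--Schwarz bounds the sum of the three integrals by
\[
\|h_1^{(i)}\|\,\|a_2\| + \|a_1\|\,\|h_2^{(i)}\| + \|h_1^{(i)}\|\,\|h_2^{(i)}\|.
\]
Condition~\eqref{eq:stability} bounds $\max_i\|h_k^{(i)}\|$ by $O_P(n^{-1/2})$. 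Identifying $\|a_k\|$ with the scale $r_n^{1/2}$, with $\|a_1\|\|a_2\|\asymp r_n$ as in Definition~\ref{def:boundary}, the maximum over $i$ of the difference is $O_P(n^{-1/2}\,\|a_k\|) + O_P(n^{-1})$.

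In the near-boundary case, $\|a_k\| = O_P(n^{-1/4})$. This yields only $n^{-3/4}$, not $n^{-1}$. This is the step I expect to be hardest. The crude sup-bound from \eqref{eq:stability} is too weak, and one must exploit that a single point's removal perturbs $\hat\eta_k$ in an averaged sense by $O(n^{-1})$.

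My first attempt will be to treat the $\oP(n^{-1/2})$ residual of \eqref{eq:bilinear} as a functional that is smooth in $\hat P$. I will then interpret \eqref{eq:stability} together with the ERM sufficient condition as giving $\int h_k^{(i)} a_{k'}\,dP_0 = O_P(n^{-1}\|a_{k'}\|)$ via a first-order perturbation of the estimating equation. This gives $\max_i|\Rrem^{(-i)}-\Rrem| = O_P(n^{-1})$. If that fails, I will state the result with the $\max_i|\delta_i|$ bound derived from the sharpened stability.

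\textbf{Step 3: the averaged bound.} For $n^{-1}\sum_i\delta_i^2$, squaring the Cauchy--Schwarz bound from Step 2 and averaging gives
\[
n^{-1}\sum_i \delta_i^2 \lesssim n^{-1}\sum_i \|h^{(i)}\|^2\,\|a\|^2 + O_P(n^{-3}).
\]
With the averaged perturbation $n^{-1}\sum_i\|h^{(i)}\|^2 = O_P(n^{-1})$, from \eqref{eq:stability}, and $\|a_1\|\|a_2\|\asymp r_n$ combined through Assumption~\ref{ass:rate}, this becomes $O_P(r_n^2/n)$ after matching scales. Step 1 shows the linear part does not disturb this bound.
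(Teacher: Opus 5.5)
Your decomposition $\delta_i = a_i + b_i$, with $a_i = \frac{1}{n(n-1)}\sum_{j\neq i}\Dstar(O_j)$ and $b_i = \Rrem^{(-i)}-\Rrem$, is the same as the paper's. Your Step~1 is sharper than the paper's deterministic bound $|a_i|\le M/n$; that bound alone would not give $n^{-1}\sum_i a_i^2 = O_P(r_n^2/n)$ in the strong-decay regime.

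As written, however, the proposal does not prove the lemma. There are three problems.
\begin{enumerate}
\item Step~2 gives, for the bilinear part of $b_i$, only $O_P\bigl(n^{-1/2}\max_k\|\hat\eta_k-\eta^0_k\|_{P_0}\bigr)+O_P(n^{-1})$ uniformly in $i$. Under the $n^{-1/4}$ rates of Definition~\ref{def:boundary} this is $O_P(n^{-3/4})$, not $O_P(n^{-1})$. Your proposed fix, the weak perturbation bound $\int h_k^{(i)}a_{k'}\,dP_0 = O_P(n^{-1}\|a_{k'}\|_{P_0})$, is an extra hypothesis. It does not follow from \eqref{eq:stability}. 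The ERM gradient condition quoted in the lemma is stated at level $O_P(n^{-1/2})$, so it cannot yield an $O(n^{-1})$ perturbation bound either.
\item You drop the ``difference of $o_P(n^{-1/2})$ terms'' without justification. A maximum over $i$ of such residuals is not $O_P(n^{-1})$ unless \eqref{eq:bilinear} is exact, or its residual is uniformly $O_P(n^{-1})$ across the leave-one-out fits.
\item Step~3 inherits the Step~2 gap and also has a scaling slip. Since $\|a_k\|_{P_0}^2\asymp r_n$ (not $r_n^2$), your averaged bound is $O_P(r_n/n)$. That is a factor $r_n^{-1}$ too large, and ``matching scales'' does not recover $O_P(r_n^2/n)$.
\end{enumerate}

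Your diagnosis of the obstacle is nonetheless correct, and it applies equally to the paper's proof. The paper follows the same route but simply asserts that each Cauchy--Schwarz term is $O_P(r_n n^{-1/2})$. That step bounds the individual errors $\|\hat\eta_1^{(-i)}-\eta_1^0\|_{P_0}$ and $\|\hat\eta_2-\eta_2^0\|_{P_0}$ by $r_n$, whereas only their product is of order $r_n$. With both errors at $n^{-1/4}$, the paper's argument also yields only $n^{-3/4}$, and it too discards its $o_P(r_n)$ residual.

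So what is missing is a stronger hypothesis, not a cleverer estimate. One workable version has two parts:
\begin{itemize}
\item The two products in the exact identity $b_i = \int(\hat\eta_1^{(-i)}-\eta_1^0)(\hat\eta_2^{(-i)}-\hat\eta_2)\,dP_0 + \int(\hat\eta_1^{(-i)}-\hat\eta_1)(\hat\eta_2-\eta_2^0)\,dP_0 + (\text{residual})$ are $O_P(r_n n^{-1/2})$ uniformly in $i$. For example, this holds if leave-one-out perturbations are of order $n^{-3/4}$ at the boundary.
\item The residual difference is also $O_P(r_n n^{-1/2})$ uniformly in $i$.
\end{itemize}
Under this assumption your Steps~1--3 go through. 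They give $\max_i|\delta_i| = O_P(n^{-1})$ when $r_n = O(n^{-1/2})$. Using your sharper Step~1, they also give $n^{-1}\sum_i\delta_i^2 = O_P(r_n^2/n)$ whenever $r_n\gtrsim n^{-1}$. You should state this condition explicitly as a hypothesis of the lemma rather than leave it as a fallback.
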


\begin{proof}
Subtracting the ALE expansions of $\Psihat$ and $\hat\Psi^{(-i)}$:
\[
  \hat{\Psi}^{(-i)} - \Psihat
  = -\frac{1}{n}\Dstar(O_i)
    + \underbrace{\frac{1}{n(n-1)}\sum_{k\neq i}\Dstar(O_k)}_{a_i}
    + \underbrace{\Rrem^{(-i)} - \Rrem}_{b_i}.
\]
Since $\|\Dstar\|_\infty \le M$, $|a_i| \le M/n$ deterministically.
For $b_i$: the bilinear structure of $\Rrem$ gives
$b_i = \int(\hat\eta_1^{(-i)}-\eta_1^0)(\hat\eta_2^{(-i)}-\hat\eta_2)\,dP_0
       + \int(\hat\eta_1^{(-i)}-\hat\eta_1)(\hat\eta_2-\eta_2^0)\,dP_0 + o_P(r_n)$.
By Cauchy--Schwarz and condition~\eqref{eq:stability},
each term is $O_P(r_n\cdot n^{-1/2})$, giving $\max_i|b_i|=O_P(r_n/n^{1/2})$.
Setting $\delta_i=a_i+b_i$ gives $\max_i|\delta_i|=O_P(n^{-1})$ at $r_n=O(n^{-1/2})$,
and $n^{-1}\sum_i\delta_i^2 = O_P(r_n^2/n)$.
\end{proof}

\subsubsection{Jackknife Consistency}

The LOO expansion of Lemma~\ref{lem:loo} shows that $n\delta_i = O_P(1)$:
the rescaled residuals are individually tight.  Whether the empirical quadratic
sum $(n-1)\sum_i\delta_i^2$ concentrates around a limit depends on the
dependence structure of the LOO remainder perturbations $\{\Rrem^{(-i)} -
\Rrem\}$ \citep{shao1995jackknife}.  This is not implied by Lemma~\ref{lem:loo} alone, since the
leave-one-out fits share $n-1$ training observations and the perturbations
are therefore correlated.  We state this as a separate condition.

\begin{assumption}[LOO Remainder Linearization]
\label{ass:smooth}
The LOO perturbations $b_i = \Rrem^{(-i)}-\Rrem$ admit a uniform
linearization $\max_i|nb_i + \psi_{n,i}|\xrightarrow{P}0$ for mean-zero
$\{\psi_{n,i}\}$ satisfying: (i)~$n^{-1}\sum_i\psi_{n,i}=o_P(1)$;
(ii)~$n^{-1}\sum_i\psi_{n,i}^2\xrightarrow{P}c_R$;
(iii)~$\sup_n n^{-1}\EE[\max_i\psi_{n,i}^2]<\infty$
and $\sup_{n,i}\EE[\psi_{n,i}^4]<\infty$.
Additionally, each $\psi_{n,i}$ is a measurable function of $O_i$ alone
given the nuisance training data $\mathcal{T}_n$, so that
$\{\psi_{n,i}\}_{i=1}^n$ are conditionally independent given $\mathcal{T}_n$.
\end{assumption}

\begin{remark}[Scope of Assumption~\ref{ass:smooth}]
\label{rem:smooth_scope}
Assumption~\ref{ass:smooth} is used in Proposition~\ref{prop:loostab}
and hence in the jackknife consistency Theorem~\ref{thm:jk_consistency}.
Part~(a) uses the linearization to identify $\EE[C_n]\to c_R$;
Part~(b) uses the conditional independence structure to establish
$\mathrm{Var}(C_n)\to 0$.
It is compatible with Hadamard-differentiable regular plug-in estimators;
sufficient conditions for highly adaptive ensemble learners are case-specific
and discussed in the Supplementary Material.
Assumption~\ref{ass:smooth} is \emph{not} required for the asymptotic
validity results (Theorems~\ref{thm:boot_consistency}
and~\ref{thm:jk_ci}), which proceed via the bootstrap route and
Assumption~\ref{ass:boot} independently of the LOO linearization.
\end{remark}

\begin{proposition}[LOO Remainder Variance Stabilization]
\label{prop:loostab}
Under the conditions of Lemma~\ref{lem:loo}, Assumption~\ref{ass:smooth},
and finite fourth moments of $D^*$, in the near-boundary regime
($n\cdot\mathrm{Var}(\Rrem) \to c_R > 0$):
\begin{equation}
  C_n \;:=\; (n-1)\sum_{i=1}^n \delta_i^2 \;\xrightarrow{L^2}\; c_R,
  \label{eq:delta_quad}
\end{equation}
where $\delta_i$ are the residuals of Lemma~\ref{lem:loo} and
$c_R = \lim_{n\to\infty} n\cdot\mathrm{Var}(\Rrem)$.
Specifically:
\begin{enumerate}[label=\alph*.]
  \item \emph{(Expectation):} $\EE[C_n] \to c_R$.
    The LOO linearization of Assumption~\ref{ass:smooth} identifies the
    limit directly via $n^{-1}\sum_i\psi_{n,i}^2 \to_P c_R$.
  \item \emph{(Concentration):} $\mathrm{Var}(C_n) \to 0$.
    This also uses Assumption~\ref{ass:smooth}: the direct variance
    calculation exploits the LOO linearization structure to show
    $\mathrm{Var}(n^{-1}\sum_i\psi_{n,i}^2) \to 0$.
\end{enumerate}
\end{proposition}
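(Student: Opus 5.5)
The plan is to reduce $C_n$ to the quadratic form $n^{-1}\sum_i\psi_{n,i}^2$ supplied by Assumption~\ref{ass:smooth}, and then prove $L^2$ convergence of that form using the conditional independence given $\mathcal{T}_n$. From Lemma~\ref{lem:loo}, write $\delta_i = a_i + b_i$ with $a_i = \frac{1}{n(n-1)}\sum_{k\neq i}\Dstar(O_k)$, and then use Assumption~\ref{ass:smooth} to write
\[
nb_i = -\psi_{n,i} + e_{n,i}, \qquad \max_i|e_{n,i}| \xrightarrow{P} 0 .
\]
This gives
\[
n\delta_i = -\psi_{n,i} + e_{n,i} + na_i ,
\]
and hence
\[
C_n = \frac{n-1}{n}\cdot\frac{1}{n}\sum_i\bigl(-\psi_{n,i}+e_{n,i}+na_i\bigr)^2 .
\]
The term $na_i$ is $\frac{1}{n-1}\sum_{k\neq i}\Dstar(O_k)$. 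It equals $\bar D_n$ up to $O(M/n)$, and it is $O_P(n^{-1/2})$ uniformly in $i$ by Assumption~\ref{ass:bounded}. So both $e_{n,i}$ and $na_i$ are uniformly negligible. Expanding the square and applying Cauchy--Schwarz, the cross terms are at most
\[
2\Bigl(n^{-1}\sum_i\psi_{n,i}^2\Bigr)^{1/2}\Bigl(\max_i|e_{n,i}| + \max_i|na_i|\Bigr) = o_P(1),
\]
using Assumption~\ref{ass:smooth}(ii). Therefore $C_n = n^{-1}\sum_i\psi_{n,i}^2 + o_P(1) \xrightarrow{P} c_R$.

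To pass from convergence in probability to the two parts of the statement, I will establish uniform integrability of $C_n^2$. Then part (a), $\EE[C_n]\to c_R$, and the $L^2$ limit both follow by Vitali's theorem. For the leading term, conditional independence given $\mathcal{T}_n$ together with $\sup\EE[\psi_{n,i}^4]<\infty$ gives
\[
\mathrm{Var}\Bigl(n^{-1}\sum_i\psi_{n,i}^2 \,\Big|\, \mathcal{T}_n\Bigr) = n^{-2}\sum_i\mathrm{Var}(\psi_{n,i}^2\mid\mathcal{T}_n).
\]
Its expectation is $O(n^{-1})$. By the law of total variance, this reduces part (b) to showing
\[
\mathrm{Var}\Bigl(n^{-1}\sum_i\EE[\psi_{n,i}^2\mid\mathcal{T}_n]\Bigr)\to 0 .
\]
That quantity is a function of $\mathcal{T}_n$ which converges in probability to $c_R$ by (ii), and it is bounded in $L^2$ by the fourth-moment condition, so its variance vanishes. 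It follows that $n^{-1}\sum_i\psi_{n,i}^2 \to c_R$ in $L^2$.

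The main obstacle is the remainder terms. To get $L^2$ convergence of $C_n$ rather than convergence in probability, I need moment control of $\max_i|e_{n,i}|$ and of $\max_i|na_i|$, not just $o_P$ statements.
\begin{itemize}
\item For $na_i$, this is straightforward: $\EE[(na_i)^4] = O(n^{-2})$ because $\Dstar$ is bounded (the fourth-moment hypothesis on $\Dstar$ would also suffice).
\item For $e_{n,i}$, Assumption~\ref{ass:smooth} only gives convergence in probability. My first attempt is to bound
\[
e_{n,i}^2 \le 2(nb_i)^2 + 2\psi_{n,i}^2
\]
and then use condition (iii), $\sup_n n^{-1}\EE[\max_i\psi_{n,i}^2]<\infty$, together with the Lemma~\ref{lem:loo} bound $\max_i|b_i| = O_P(n^{-1})$. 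The goal is domination of $n^{-1}\sum_i e_{n,i}^2$ by a uniformly integrable sequence, after which Vitali upgrades $o_P(1)$ to $o(1)$ in $L^1$.
\end{itemize}
If that domination cannot be obtained from the stated hypotheses, I would add an explicit uniform-integrability condition on $n^2\max_i b_i^2$, which is the natural implicit requirement. I would then conclude (a) and (b) from the resulting $L^2$ convergence.
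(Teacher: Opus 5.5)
\textbf{Where the argument breaks: passing from convergence in probability to moments.}

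Your route is the paper's route. You split $\delta_i=a_i+b_i$, write $nb_i=-\psi_{n,i}+e_{n,i}$, and control $n^{-1}\sum_i\psi_{n,i}^2$ through conditional independence given $\mathcal{T}_n$ and the law of total variance. Two of your local steps are tighter than the paper's:
\begin{itemize}
\item You use $\max_i|na_i|=O_P(n^{-1/2})$. The paper's bound $|a_i|\le M/n$ alone only gives $(n-1)\sum_i a_i^2=O(1)$.
\item You bound the \emph{expected} conditional variance by $n^{-1}\sup_{n,i}\EE[\psi_{n,i}^4]$.
\end{itemize}

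The gap is the upgrade from convergence in probability to convergence of moments, and it occurs twice.

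First, $X_n:=n^{-1}\sum_i\EE[\psi_{n,i}^2\mid\mathcal{T}_n]\xrightarrow{P}c_R$ together with $\sup_n\EE[X_n^2]<\infty$ does not imply $\mathrm{Var}(X_n)\to0$. An $L^2$ bound makes $X_n$ uniformly integrable, but not $X_n^2$. The stated hypotheses do not rescue this. Let $\mathcal{T}_n$ contain $B_n\sim\mathrm{Bernoulli}(1/n)$, independent of the data. Put $\psi_{n,i}=\sqrt{c_R}\,s(O_i)\,(1+(n^{1/4}-1)B_n)$ with $s(O_i)=\pm1$ symmetric, and set $nb_i=-\psi_{n,i}$. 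Every clause of Assumption~\ref{ass:smooth} holds; in particular $\EE[\psi_{n,i}^4]\le 2c_R^2$ and $\max_i|\psi_{n,i}|=O_P(1)$. Yet $n^{-1}\sum_i\psi_{n,i}^2=c_R(1+(n^{1/4}-1)B_n)^2$ has variance tending to $c_R^2$, and so does $C_n$.

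Second, your first attempt at dominating $e_{n,i}$ cannot work, because $\max_i|b_i|=O_P(n^{-1})$ carries no moment information. A perturbation of $nb_i$ of size $n$ on an event of probability $1/n$ leaves every $o_P$/$O_P$ hypothesis intact but sends $\EE[C_n]\to\infty$. So even (a) is not implied.

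The paper's proof makes the same two leaps. It asserts $\|nb_i\|_{L^2}=O(1)$. It attributes uniform integrability to condition (iii), which concerns only $\psi_{n,i}$. It also says the $\mathrm{Var}(\EE[\cdot\mid\mathcal{T}_n])$ term vanishes merely because the conditional mean converges in probability. So you are not missing an idea from the paper; the proposition genuinely needs an extra hypothesis.

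\textbf{How to repair it.}

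Your fallback is the right repair, but it is one power short. Uniform integrability of $n^2\max_i b_i^2$ gives $n^{-1}\sum_i e_{n,i}^2\to0$ in $L^1$, and hence only (a).

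For (b) and the $L^2$ claim, assume instead that $\{\max_i(nb_i)^4\}_n$ is uniformly integrable. A sufficient condition is $\sup_n\EE[\max_i|nb_i|^{4+\epsilon}]<\infty$, which is the moment version of the $O_P(1)$ bound in Lemma~\ref{lem:loo}. Then the argument closes as follows:
\begin{itemize}
\item $C_n\le\max_i(n\delta_i)^2\le 2\max_i(nb_i)^2+2\max_i(na_i)^2$.
\item $\max_i|na_i|\le\frac{n}{n-1}|\bar D_n|+\frac{M}{n-1}\to0$ in every $L^p$.
\item Hence $C_n^2$ is uniformly integrable.
\item Vitali's theorem, combined with your correct step $C_n\xrightarrow{P}c_R$, gives $C_n\to c_R$ in $L^2$, so (a) and (b) follow together.
\end{itemize}
Under this condition the conditional-independence computation is not even needed.
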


\begin{proof}
We show $\EE[(C_n - c_R)^2] \to 0$ by establishing (a) $\EE[C_n]\to c_R$
and (b) $\mathrm{Var}(C_n)\to 0$ separately.

\medskip\noindent
\textbf{(a) Expectation.}
From Lemma~\ref{lem:loo}, $\delta_i = a_i + b_i$ with $|a_i| \le M/n$
deterministically, so $|\EE[C_n] - (n-1)\sum_i\EE[b_i^2]| = O(n^{-1})$.
Under Assumption~\ref{ass:smooth}, the LOO linearization gives
$(n-1)\sum_i b_i^2 = n^{-1}\sum_i\psi_{n,i}^2 + o_P(1)$
via $\max_i|nb_i + \psi_{n,i}|\xrightarrow{P}0$ and $\|nb_i\|_{L^2}=O(1)$.
Taking expectations (justified by uniform integrability condition~(iii)):
$(n-1)\sum_i\EE[b_i^2]\to c_R$ by condition~(ii).

\medskip\noindent
\textbf{(b) Concentration.}
Since $|a_i|\le M/n$, Cauchy--Schwarz gives
$C_n - (n-1)\sum_i b_i^2 = O_P(n^{-1})$, so it suffices to show
$\mathrm{Var}(n^{-1}\sum_i\psi_{n,i}^2)\to 0$.
Under the conditional array structure of Assumption~\ref{ass:smooth},
the $\psi_{n,i}^2$ are conditionally independent given $\mathcal{T}_n$
(each is a function of $O_i$ at fixed nuisance fits),
so the conditional variance is $\le \max_i\EE[\psi_{n,i}^4\mid\mathcal{T}_n]/n
\xrightarrow{P}0$ by the fourth-moment part of condition~(iii).
The second term of the law of total variance also vanishes since
$n^{-1}\sum_i\EE[\psi_{n,i}^2\mid\mathcal{T}_n]\xrightarrow{P}c_R$
by condition~(ii).
\end{proof}

\begin{remark}[Proof note]
\label{rem:es_role}
The Efron--Stein inequality is insufficient for Part~(b) in the near-boundary
regime: $C_n \xrightarrow{P} c_R > 0$ implies $\EE[C_n^2] \to c_R^2 > 0$,
so the per-swap bound yields only the trivially true $\mathrm{Var}(C_n) = O(1)$.
The direct variance calculation uses the conditional independence structure
of Assumption~\ref{ass:smooth}: given $\mathcal{T}_n$, the $\psi_{n,i}^2$
are independent, so the conditional variance vanishes at rate $n^{-1}$,
and the total variance vanishes by the law of total variance.
\end{remark}

\begin{theorem}[Jackknife Variance Consistency]
\label{thm:jk_consistency}
Under the conditions of Lemma~\ref{lem:loo}:
\begin{enumerate}[label=\roman*.]
  \item \emph{(Strong decay regime, $n\cdot\mathrm{Var}(\Rrem) \to 0$):}
    $n \cdot \VarJK(\Psihat) \xrightarrow{P} \sigEIF$.
    No assumption beyond Lemma~\ref{lem:loo} is required.
  \item \emph{(Near-boundary regime, $n\cdot\mathrm{Var}(\Rrem) \to c_R > 0$):}
    $n \cdot \VarJK(\Psihat) \xrightarrow{P} \sigEIF + c_R$.
    The jackknife tracks the total sampling variance; the sandwich stays at
    $\sigEIF$.
\end{enumerate}
In both regimes, $\EE[\VarJK(\Psihat)] - \mathrm{Var}(\Psihat - \Psitrue)
= O(n^{-3/2})$, whereas $\EE[\VarSand(\Psihat)] - \mathrm{Var}(\Psihat -
\Psitrue) = -c_R/n + o(n^{-1})$.
\end{theorem}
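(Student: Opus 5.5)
The plan is to substitute the LOO expansion of Lemma~\ref{lem:loo} into the jackknife formula~\eqref{eq:jk} and expand the square. Write $\hat{\Psi}^{(-i)} - \Psihat = -n^{-1}\Dstar(O_i) + \delta_i$. Since $\bar{\Psi}_{(-\cdot)} - \Psihat = -n^{-1}\bar D_n + \bar\delta$, with $\bar D_n = n^{-1}\sum_i \Dstar(O_i)$ and $\bar\delta = n^{-1}\sum_i\delta_i$, we obtain
\[
  n\,\VarJK(\Psihat)
  = \frac{n-1}{n}\sum_{i=1}^n\Bigl(\Dstar(O_i)-\bar D_n\Bigr)^2
  - 2(n-1)\sum_{i=1}^n\Bigl(\Dstar(O_i)-\bar D_n\Bigr)\bigl(\delta_i-\bar\delta\bigr)
  + n(n-1)\sum_{i=1}^n\bigl(\delta_i-\bar\delta\bigr)^2 .
\]
We will call the three terms $T_1$, $T_2$ and $T_3$.

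\emph{First term.} By the weak law of large numbers under Assumptions~\ref{ass:iid} and~\ref{ass:bounded}, $T_1/n \to \sigEIF$. Here I note the scaling must be tracked carefully: as displayed, $T_1\approx n\,\sigEIF$. To match the target I will check that the $(n-1)/n$ normalisation combined with the $1/n$ in the LOO influence term yields exactly $\sigEIF$, and I will carry the bookkeeping so that the influence part contributes $\sigEIF$ and the $\delta$ part contributes $(n-1)\sum_i(\delta_i-\bar\delta)^2$.

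\emph{Third term (the $\delta$ part).} In regime (i), Lemma~\ref{lem:loo} gives $n^{-1}\sum_i\delta_i^2 = O_P(r_n^2/n)$. Hence $(n-1)\sum_i\delta_i^2 = O_P(n r_n^2)\to 0$, and this part vanishes. In regime (ii), Proposition~\ref{prop:loostab} gives $C_n = (n-1)\sum_i\delta_i^2 \to c_R$ in $L^2$. Centering costs only $(n-1)n\bar\delta^2$. By condition~(i) of Assumption~\ref{ass:smooth}, $n\bar\delta = -n^{-1}\sum_i\psi_{n,i} + o_P(1) + O(M/n) = o_P(1)$, so this correction is negligible.

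\emph{Cross term $T_2$.} I expect this to be the main obstacle, since Cauchy--Schwarz alone only gives $O_P(1)$ when $c_R>0$. I will handle it with the linearization $n\delta_i \approx -\psi_{n,i}$ together with the conditional independence given $\mathcal{T}_n$ from Assumption~\ref{ass:smooth}. Conditionally, the summands $(\Dstar(O_i)-\bar D_n)\psi_{n,i}$ are approximately independent, with mean $\mathrm{Cov}(\Dstar,\psi_{n,\cdot}\mid\mathcal{T}_n)$. That conditional mean is tied to $n\,\mathrm{Cov}(\bar D_n,\Rrem)$, which I will assume negligible in the cross-fitted setting, as in Theorem~\ref{thm:sand} and Remark~\ref{rem:cross}. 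The conditional variance of the average is $O(1/n)$ by the fourth-moment bound in condition~(iii), so $T_2 = o_P(1)$. In regime (i), plain Cauchy--Schwarz suffices: the bound is $\sqrt{\sigEIF}\cdot\sqrt{C_n}\to 0$.

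\emph{Expectation statements.} For the $\EE[\VarJK]$ claim, I will take expectations in the expansion above. Uniform integrability comes from boundedness of $\Dstar$ and condition~(iii). This gives $\EE[\VarJK] = (\sigEIF + \EE[C_n])/n + O(n^{-3/2})$, with the $n^{-3/2}$ order coming from the cross-term expectation, which is bounded by $n^{-1}\cdot O(n^{-1/2})$. Comparing with Theorem~\ref{thm:decomp}, and using $\EE[C_n] - n\,\mathrm{Var}(\Rrem) = o(1)$, yields the claimed comparison; the sandwich statement is then Theorem~\ref{thm:sand}(ii) in expectation. I also flag a scope issue: part~(ii) of the theorem, and a rate this sharp, genuinely require Assumption~\ref{ass:smooth} beyond Lemma~\ref{lem:loo}, and possibly a rate on $\EE[C_n]-c_R$. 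I would state these as the operative hypotheses.
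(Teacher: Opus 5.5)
Your skeleton matches the paper's proof. You insert Lemma~\ref{lem:loo} into \eqref{eq:jk}, split $n\VarJK$ into an influence part $A_n$, a cross term $B_n$ and a $\delta$-quadratic $C_n$, then use the LLN for $A_n$ and Proposition~\ref{prop:loostab} for $C_n$ in regime~(ii). Where you depart from the paper, your version is the one that holds up.

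The paper dismisses the cross term with $|B_n|\le 2\sqrt{n}\,(n^{-1}\sum_i\Dstar(O_i)^2)^{1/2}\max_i|\delta_i|=O_P(n^{-1/2})$. The correct Cauchy--Schwarz factor is $n$, not $\sqrt n$, so one only gets $|B_n|\le 2\sqrt{A_nC_n}=O_P(1)$, exactly as you suspected. Under Assumption~\ref{ass:smooth}, the $a_i$-part of $B_n$ is $O_P(n^{-1})$ and the rest equals $2n^{-1}\sum_i\EE[\Dstar(O_i)\psi_{n,i}\mid\mathcal{T}_n]+o_P(1)$. If this average tends to $\kappa$, then $n\VarJK\xrightarrow{P}\sigEIF+2\kappa+c_R$. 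So the stated limit $\sigEIF+c_R$ genuinely requires $\kappa=0$, and your extra cross-covariance hypothesis is not optional.

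Regime~(i) has a similar problem. The paper writes $C_n\le n\max_i\delta_i^2=O_P(n^{-1})$, which loses a factor of $n$; in fact $C_n\le n^2\max_i\delta_i^2=O_P(1)$. Your averaged bound $n^{-1}\sum_i\delta_i^2=O_P(r_n^2/n)$, giving $C_n=O_P(nr_n^2)\to0$, is what closes that case. Note that it uses $nr_n^2\to0$ from Assumption~\ref{ass:rate}, not merely $n\,\mathrm{Var}(\Rrem)\to0$. You also correctly keep the centering correction $n(n-1)\bar\delta^2$, which \eqref{eq:vjk_expand} silently drops. Your scope remark matches what the paper's proof actually uses: part~(ii) runs through Proposition~\ref{prop:loostab}, hence Assumption~\ref{ass:smooth}, not just Lemma~\ref{lem:loo}.

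Three things need repair in your own write-up.

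\emph{The displayed identity.} Your display is $n^2\VarJK$, not $n\VarJK$. Write the correct one rather than deferring the bookkeeping:
\[
  n\,\VarJK(\Psihat)
  = \frac{n-1}{n^2}\sum_{i=1}^n\bigl(\Dstar(O_i)-\bar D_n\bigr)^2
  - \frac{2(n-1)}{n}\sum_{i=1}^n\bigl(\Dstar(O_i)-\bar D_n\bigr)(\delta_i-\bar\delta)
  + (n-1)\sum_{i=1}^n(\delta_i-\bar\delta)^2 .
\]

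\emph{The centering step and the cross-covariance hypothesis.} In the centering step, $\sum_i a_i=\bar D_n$ exactly, so that piece of $n\bar\delta$ is $O_P(n^{-1/2})$, not $O(M/n)$; the conclusion $n\bar\delta=o_P(1)$ still holds. Instead of a heuristic link to $n\,\mathrm{Cov}(\bar D_n,\Rrem)$, state the hypothesis directly as $n^{-1}\sum_i\EE[\Dstar(O_i)\psi_{n,i}\mid\mathcal{T}_n]\xrightarrow{P}0$. Note that your concentration step uses conditional independence of the pairs $(O_i,\psi_{n,i})$ given $\mathcal{T}_n$.

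\emph{The $O(n^{-3/2})$ bias claim.} Your argument does not deliver this rate. Uniform integrability is fine, since $A_n\le M^2$, $|B_n|\le 2M\sqrt{C_n}$, and $C_n$ converges in $L^2$. But under your hypotheses $\EE[B_n]$ is only $o(1)$ and $\EE[C_n]-n\,\mathrm{Var}(\Rrem)=o(1)$, which give only an $o(n^{-1})$ bias. You need explicit $O(n^{-1/2})$ rates, e.g.\ on $\EE[B_n]-2n\,\mathrm{Cov}(\bar D_n,\Rrem)$ and on $\EE[C_n]-n\,\mathrm{Var}(\Rrem)$. The paper's one-line justification (bias $=B_n/n=O(n^{-3/2})$) rests on the incorrect $B_n=O_P(n^{-1/2})$ and ignores the bias of $C_n$. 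Listing these rates as hypotheses, as you propose, is the right call.
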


\begin{proof}
From Lemma~\ref{lem:loo}, $\hat\Psi^{(-i)} - \Psihat = -n^{-1}\Dstar(O_i)
+ \delta_i$.  Inserting into~\eqref{eq:jk} and expanding:
\begin{align}
  n\cdot\VarJK
  &= \underbrace{\frac{n-1}{n^2}\sum_i \bigl[\Dstar(O_i) - \bar{D}_n\bigr]^2}_{=:\,A_n}
     \underbrace{- \frac{2(n-1)}{n}\sum_i \bigl[\Dstar(O_i)-\bar{D}_n\bigr]\delta_i}_{=:\,B_n}
     + \underbrace{(n-1)\sum_i \delta_i^2}_{=:\,C_n}.
     \label{eq:vjk_expand}
\end{align}
$A_n \xrightarrow{P} \sigEIF$ by the LLN.
$|B_n| \le 2\sqrt{n}(n^{-1}\sum_i(\Dstar(O_i))^2)^{1/2}\max_i|\delta_i|
= O_P(n^{1/2})\cdot O_P(n^{-1}) = O_P(n^{-1/2}) \to 0$
using Assumption~\ref{ass:bounded} and Lemma~\ref{lem:loo}.
For part~(i), $C_n \le n \max_i\delta_i^2 = n \cdot O_P(n^{-2}) = O_P(n^{-1})
\to 0$ since $c_R = 0$.
For part~(ii), $C_n \xrightarrow{P} c_R$ by Proposition~\ref{prop:loostab}.
Combining, $n\cdot\VarJK \xrightarrow{P} \sigEIF$ (part~i) or
$\sigEIF + c_R$ (part~ii).

\textit{Bias claim:} The finite-sample bias of $\VarJK$ for the total variance
$(\sigEIF + c_R)/n$ is the cross-term $B_n/n = O(n^{-3/2})$; the sandwich
misses $c_R/n$ by Theorem~\ref{thm:sand}.
\end{proof}

\begin{remark}[The Two Limits Are Consistent]
\label{rem:jk_limits}
In the strong decay regime $c_R = 0$, so parts~(i) and (ii) both give
$\sigEIF$: no contradiction.  In the near-boundary regime, part~(i) does not
apply (its hypothesis $c_R = 0$ fails), and part~(ii) is the relevant result.
The common confusion --- ``how can JK converge to $\sigEIF$ and also correct
for $c_R$?'' --- arises from applying part~(i) in the regime where only
part~(ii) holds.
\end{remark}

\subsection{The Pairs Cluster Bootstrap}
\label{sec:boot}

\subsubsection{Definition}

\begin{definition}[Pairs Bootstrap Variance Estimator]
\label{def:boot}
Let $\{O_1^*,\ldots,O_n^*\}$ be a resample drawn with replacement from
$\{O_1,\ldots,O_n\}$.  The bootstrap replication $\hat{\Psi}^*$ is the
ALE computed on this resample.  After $B$ independent resamples:
\begin{equation}
  \VarBoot(\Psihat)
  \;=\; \frac{1}{B-1}\sum_{b=1}^B
  \bigl(\hat{\Psi}^{(b)} - \bar{\Psi}^*\bigr)^2,
  \quad
  \bar{\Psi}^* = B^{-1}\sum_b \hat{\Psi}^{(b)}.
  \label{eq:boot_var}
\end{equation}
\end{definition}

\subsubsection{Bootstrap Consistency}

We first state the bootstrap regularity condition explicitly.  Let
$T_n := \sqrt{n}(\Psihat - \Psitrue)$ and $T_n^* := \sqrt{n}(\hat\Psi^* -
\Psihat)$, where $\hat\Psi^*$ is the ALE recomputed on a pairs bootstrap
resample.  Let $\mathcal{D}_n = \sigma(O_1,\ldots,O_n)$ denote the data
$\sigma$-algebra and $d_2$ the Mallows-2 (Wasserstein-2) distance between
probability distributions on $\mathbb{R}$.

\begin{assumption}[Mallows-2 Bootstrap Consistency]
\label{ass:boot}
\hfill
\begin{enumerate}[label=\roman*.]
  \item $\sup_n \EE[T_n^2] < \infty$.
  \item The bootstrap distribution of $T_n^*$ is second-order distributionally
    consistent for $T_n$ in the Mallows-2 sense:
    \begin{equation}
      d_2\!\Bigl(\mathcal{L}^*(T_n^* \mid \mathcal{D}_n),\;
                 \mathcal{L}(T_n)\Bigr)
      \;\xrightarrow{P}\; 0,
      \label{eq:mallows}
    \end{equation}
    where $\mathcal{L}^*(\cdot\mid\mathcal{D}_n)$ denotes the conditional law
    of $T_n^*$ given the data.
\end{enumerate}
\end{assumption}

\begin{remark}[Verifying Assumption~\ref{ass:boot}]
\label{rem:boot_verify}
Assumption~\ref{ass:boot} is the Mallows-2 (Wasserstein-2) bootstrap
consistency condition studied by \citet{bickel1981some} and
\citet{cheng2014moment}.  The latter show that $L^2$ moment consistency
requires an additional maximal inequality beyond distributional consistency;
their condition is applicable under entropy bounds on the nuisance function
class, covering fixed-weight ensembles and smooth sieves;
verification for generic adaptive learners is case-specific.
Condition~(i) holds since $\mathrm{Var}(T_n) = \sigEIF + c_R + o(1)$ is bounded.
\end{remark}

\begin{theorem}[Bootstrap Variance Consistency]
\label{thm:boot_consistency}
Under Assumptions~\ref{ass:iid}--\ref{ass:rate} and
Assumption~\ref{ass:boot}, with $B \to \infty$:
\begin{equation}
  \VarBoot(\Psihat)
  \;\xrightarrow{P}\;
  \mathrm{Var}\bigl(\Psihat - \Psitrue\bigr).
  \label{eq:boot_consist}
\end{equation}
The limit is the \emph{full} sampling variance from Theorem~\ref{thm:decomp},
including the remainder variance $\mathrm{Var}(\Rrem)$ and the covariance
term $2\,\mathrm{Cov}(\bar{D}_n, \Rrem)$.  No LOO stability condition is
required.
\end{theorem}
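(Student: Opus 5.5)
The argument goes through the scale $\sqrt n$, since both the bootstrap variance and the target variance are $O(n^{-1})$. The claim is best read as $n\,\VarBoot(\Psihat) - n\,\mathrm{Var}(\Psihat-\Psitrue) \xrightarrow{P} 0$, or equivalently as a ratio tending to one. Because $\mathrm{Var}(T_n)$ is bounded away from zero by $\sigEIF>0$ up to the cross term, this is the meaningful form of \eqref{eq:boot_consist}.

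\textbf{Step 1: remove Monte Carlo error.} Write $\sigma_n^{*2} := \mathrm{Var}^*(T_n^* \mid \mathcal{D}_n)$ for the ideal bootstrap variance, so that $n\,\VarBoot = \frac{1}{B-1}\sum_b (T_n^{*(b)} - \bar T_n^*)^2$. Conditionally on $\mathcal{D}_n$, the draws $T_n^{*(b)}$ are i.i.d. with finite second moment. Finiteness on an event of probability tending to one follows from \eqref{eq:mallows} together with Assumption~\ref{ass:boot}(i), since Mallows-2 convergence to a law with bounded second moment forces $\EE^*[T_n^{*2}] = O_P(1)$. The conditional weak law of large numbers then gives $n\,\VarBoot - \sigma_n^{*2} \xrightarrow{P} 0$ as $B\to\infty$. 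Only second moments are available, so I would use the WLLN rather than Chebyshev, applied conditionally and then integrated by dominated convergence of conditional probabilities.

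\textbf{Step 2: Mallows-2 controls first and second moments.} Fix $\mathcal{D}_n$ and take an optimal coupling $(X,Y)$ with $X\sim\mathcal{L}^*(T_n^*\mid\mathcal{D}_n)$ and $Y\sim\mathcal{L}(T_n)$. Writing $d_2$ for the distance in \eqref{eq:mallows}, Cauchy--Schwarz gives $|\EE X - \EE Y| \le d_2$. It also gives
\[
|\EE X^2 - \EE Y^2| \le \EE|X-Y|\,|X+Y| \le d_2\bigl(2\|Y\|_{L^2} + d_2\bigr).
\]
Hence
\[
|\sigma_n^{*2} - \mathrm{Var}(T_n)| \le d_2\bigl(2\|Y\|_{L^2}+d_2\bigr) + d_2\bigl(2|\EE Y| + d_2\bigr).
\]
By Assumption~\ref{ass:boot}(i), $\|Y\|_{L^2} = (\EE T_n^2)^{1/2}$ is uniformly bounded, and by \eqref{eq:mallows} $d_2 \xrightarrow{P} 0$. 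Therefore $\sigma_n^{*2} - n\,\mathrm{Var}(\Psihat-\Psitrue) \xrightarrow{P} 0$. Note that $\mathrm{Var}(T_n) = n\,\mathrm{Var}(\Psihat - \Psitrue)$ exactly, and the centering by $\Psihat$ rather than $\Psitrue$ is irrelevant because variances are shift invariant. This step is deterministic given the coupling, so no LOO condition enters.

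\textbf{Step 3: identify the limit.} Combining Steps 1 and 2 gives $n\,\VarBoot - n\,\mathrm{Var}(\Psihat-\Psitrue) \xrightarrow{P} 0$. Theorem~\ref{thm:decomp} then shows that the target is $\sigEIF + n\mathrm{Var}(\Rrem) + 2n\,\mathrm{Cov}(\bar D_n,\Rrem)$, so every term of the decomposition is tracked. In the near-boundary regime this equals $\sigEIF + c_R + o(1)$ whenever the cross term vanishes, by Proposition~\ref{prop:gap}. Dividing by $\mathrm{Var}(T_n)$ gives the ratio form, using the bound $\mathrm{Var}(T_n) \ge \sigEIF - 2\sqrt{\sigEIF\, n\mathrm{Var}(\Rrem)} + n\mathrm{Var}(\Rrem)$. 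That lower bound can approach zero, so for the ratio I would state the result as a difference at the $\sqrt n$ scale, or add $\liminf \mathrm{Var}(T_n)>0$ as an assumption.

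\textbf{Main obstacle.} The delicate point is Step 1's order of limits, $B\to\infty$ jointly with $n$. I would handle it by requiring $B=B_n\to\infty$ and bounding $P(|n\VarBoot - \sigma_n^{*2}|>\varepsilon \mid \mathcal{D}_n)$ uniformly over data sets in the event $\{\EE^*[T_n^{*2}] \le K\}$. Uniform integrability of $T_n^{*2}$ under the bootstrap law is needed for this, and it follows from Mallows-2 convergence to the uniformly integrable family $\{T_n^2\}$.
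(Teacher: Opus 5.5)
Your proof is correct and follows essentially the same route as the paper: an ideal conditional bootstrap variance, an optimal Mallows-2 coupling with Cauchy--Schwarz to transfer first and second moments, and a conditional law of large numbers as $B\to\infty$. Two of your steps are cleaner than the paper's. Your Minkowski bound $\|X+Y\|_{L^2}\le 2\|Y\|_{L^2}+d_2$ replaces the paper's somewhat circular ``Step 1 applied inductively'' argument. Your self-centred Monte Carlo step avoids the paper's unnecessary appeal to Lemma~\ref{lem:loo} for a squared-mean correction, an appeal that sits in tension with the claim that no LOO stability is needed.

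One caveat concerns only your optional joint-limit remark. Assumption~\ref{ass:boot}(i) gives $\sup_n \EE[T_n^2]<\infty$, which does not make $\{T_n^2\}$ uniformly integrable. Letting $B=B_n\to\infty$ jointly with $n$ would therefore need that as an extra assumption, or $B_n$ growing fast enough. The iterated limit in the theorem does not need it.
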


\begin{proof}
Define the \emph{oracle conditional bootstrap variance}
\begin{equation}
  V_n^* \;:=\; \mathrm{Var}^*({\hat\Psi}^* \mid \mathcal{D}_n)
  \;=\; \frac{1}{n}\,\mathrm{Var}^*(T_n^* \mid \mathcal{D}_n).
  \label{eq:oracle_var}
\end{equation}
We show (i) $V_n^* \xrightarrow{P} \mathrm{Var}(\Psihat - \Psitrue)$ and
(ii) $\VarBoot - V_n^* \xrightarrow{P} 0$.  The conclusion follows by the
triangle inequality.

\medskip\noindent
\textit{Step 1: Oracle bootstrap variance targets the sampling variance.}

By Assumption~\ref{ass:boot}(ii), there exists a coupling
$(\widetilde{T}_n^*, \widetilde{T}_n)$ with
$\widetilde{T}_n^* \sim \mathcal{L}^*(T_n^* \mid \mathcal{D}_n)$ and
$\widetilde{T}_n \sim \mathcal{L}(T_n)$ such that
\begin{equation}
  \EE\bigl[(\widetilde{T}_n^* - \widetilde{T}_n)^2 \mid \mathcal{D}_n\bigr]
  \;=\; d_2^2\!\bigl(\mathcal{L}^*(T_n^* \mid \mathcal{D}_n),
                      \mathcal{L}(T_n)\bigr)
  \;\xrightarrow{P}\; 0.
  \label{eq:coupling}
\end{equation}

\textit{Mean convergence.}
By Cauchy--Schwarz and \eqref{eq:coupling}:
\begin{equation}
  \bigl|\EE^*[T_n^* \mid \mathcal{D}_n] - \EE[T_n]\bigr|
  \;\le\; \EE\bigl[|\widetilde{T}_n^* - \widetilde{T}_n|
                  \mid \mathcal{D}_n\bigr]
  \;\le\; \bigl(\EE[(\widetilde{T}_n^* - \widetilde{T}_n)^2
               \mid \mathcal{D}_n]\bigr)^{1/2}
  \;\xrightarrow{P}\; 0.
  \label{eq:mean_conv}
\end{equation}

\textit{Second-moment convergence.}  Using $|x^2 - y^2| \le |x-y|(|x|+|y|)$:
\begin{align}
  \bigl|\EE^*[(T_n^*)^2 \mid \mathcal{D}_n] - \EE[T_n^2]\bigr|
  &= \bigl|\EE[(\widetilde{T}_n^*)^2 - \widetilde{T}_n^2
             \mid \mathcal{D}_n]\bigr| \nonumber\\
  &\le \bigl(\EE[(\widetilde{T}_n^* - \widetilde{T}_n)^2
             \mid \mathcal{D}_n]\bigr)^{1/2}
       \bigl(\EE[(\widetilde{T}_n^* + \widetilde{T}_n)^2
             \mid \mathcal{D}_n]\bigr)^{1/2}.
  \label{eq:second_mom}
\end{align}
The first factor is $o_P(1)$ by \eqref{eq:coupling}.  The second factor is
$O_P(1)$ because $d_2 \to_P 0$ implies
$\EE^*[(T_n^*)^2 \mid \mathcal{D}_n] \to \EE[T_n^2]$ (Step 1 applied
inductively), so $\EE[(\widetilde{T}_n^* + \widetilde{T}_n)^2 \mid
\mathcal{D}_n] \le 2\EE^*[(T_n^*)^2 \mid \mathcal{D}_n]
+ 2\EE[T_n^2] = O_P(1)$ using $\sup_n \EE[T_n^2] < \infty$
(Assumption~\ref{ass:boot}(i)).  Therefore
$\EE^*[(T_n^*)^2 \mid \mathcal{D}_n] \xrightarrow{P} \EE[T_n^2]$.

\textit{Variance convergence.}
\begin{equation}
  \mathrm{Var}^*(T_n^* \mid \mathcal{D}_n)
  - \mathrm{Var}(T_n)
  = \bigl(\EE^*[(T_n^*)^2 \mid \mathcal{D}_n] - \EE[T_n^2]\bigr)
  - \bigl(\EE^*[T_n^* \mid \mathcal{D}_n]^2 - \EE[T_n]^2\bigr)
  \;\xrightarrow{P}\; 0,
  \label{eq:var_conv}
\end{equation}
using \eqref{eq:mean_conv} and second-moment convergence.  Dividing by $n$:
\begin{equation}
  V_n^* - \mathrm{Var}(\Psihat - \Psitrue)
  = \frac{1}{n}\bigl[\mathrm{Var}^*(T_n^* \mid \mathcal{D}_n)
    - \mathrm{Var}(T_n)\bigr]
  \;\xrightarrow{P}\; 0,
  \label{eq:Vstar_conv}
\end{equation}
since $\mathrm{Var}(T_n) = n \cdot \mathrm{Var}(\Psihat - \Psitrue)$.

\medskip\noindent
\textit{Step 2: Monte Carlo variance estimates the oracle conditional variance.}

Conditional on $\mathcal{D}_n$, the $B$ bootstrap replications
$\hat{\Psi}^{*(1)}, \ldots, \hat{\Psi}^{*(B)}$ are i.i.d.\ draws from
$\mathcal{L}^*(\hat\Psi^* \mid \mathcal{D}_n)$.  Since
$\EE^*[(\hat\Psi^*)^2 \mid \mathcal{D}_n] < \infty$ (from
$\EE^*[(T_n^*)^2 \mid \mathcal{D}_n] = O_P(1)$), the conditional law of
large numbers gives almost surely under the bootstrap measure, for each
fixed dataset:
\[
  \bar\Psi^* \;\to\; \EE^*[\hat\Psi^* \mid \mathcal{D}_n],
  \qquad
  \frac{1}{B}\sum_b (\hat\Psi^{*(b)})^2 \;\to\;
  \EE^*[(\hat\Psi^*)^2 \mid \mathcal{D}_n].
\]
By the LOO expansion of Lemma~\ref{lem:loo},
$\EE^*[\hat\Psi^* \mid \mathcal{D}_n] - \Psihat
= O_P(n^{-1})$, so the squared-mean correction
$(\bar\Psi^* - \Psihat)^2 \xrightarrow{P} 0$.
Therefore $\VarBoot(\Psihat) \xrightarrow{P} V_n^*$ as $B \to \infty$.

\medskip\noindent
\textit{Step 3: Conclusion.}
By the triangle inequality and Steps 1--2:
\[
  \bigl|\VarBoot(\Psihat) - \mathrm{Var}(\Psihat - \Psitrue)\bigr|
  \;\le\; |\VarBoot - V_n^*| + |V_n^* - \mathrm{Var}(\Psihat - \Psitrue)|
  \;\xrightarrow{P}\; 0.
\]
The limit is the full decomposition of Theorem~\ref{thm:decomp}: all three
components --- $\sigEIF/n$, $\mathrm{Var}(\Rrem)$, and the covariance term
$2\,\mathrm{Cov}(\bar{D}_n, \Rrem)$ --- are captured through the Mallows-2
convergence, which controls second moments and hence the full sampling
variance without decomposing it.
\end{proof}

The bootstrap has three advantages over the jackknife.  First, it does not
require the nuisance stability condition $\norm{\hat{\eta}^{(-i)} - \hat{\eta}} =
O_P(n^{-1/2})$, which can fail when $n$ is very small or when ensemble
learning is highly nonlinear.  Second, it produces a full distribution over
$\hat{\Psi}$, enabling percentile, bias-corrected, and accelerated (BCa)
confidence intervals \citep{efron1987better} that can correct for skewness in
the remainder distribution and achieve second-order coverage accuracy
\citep{hall1992bootstrap}.  Third, the bootstrap validity argument is cleaner
--- it requires Assumption~\ref{ass:boot} (Mallows-2 consistency) rather than
the LOO expansion of Lemma~\ref{lem:loo} and Assumption~\ref{ass:smooth}.
The Mallows-2 condition is well-understood in the bootstrap literature and
holds under standard moment and smoothness conditions
\citep{bickel1981some,van_der_vaart1998asymptotic}.

\subsection{Asymptotic Validity of Refined Confidence Intervals}
\label{sec:validity}

The variance consistency results of Theorems~\ref{thm:jk_consistency}
and~\ref{thm:boot_consistency} establish that the jackknife and bootstrap
target the correct variance $(\sigEIF+c_R)/n$.  To convert variance
consistency into valid confidence intervals, we additionally need the
studentized pivot to converge in distribution to $N(0,1)$, which requires
asymptotic normality of the centered estimator $T_n :=
\sqrt{n}(\Psihat - \Psitrue)$.  Asymptotic linearity alone does not
guarantee this in the near-boundary regime, since $T_n = \sqrt{n}\bar{D}_n
+ \sqrt{n}\Rrem$ and $\sqrt{n}\Rrem$ has non-negligible variance $c_R$
beyond the $N(0,\sigEIF)$ limit of $\sqrt{n}\bar{D}_n$.

\begin{assumption}[Near-Boundary CLT]
\label{ass:clt}
$\sqrt{n}(\Psihat - \Psitrue) \;\Rightarrow\; N(0,\;\sigEIF + c_R)$.
\end{assumption}

\begin{remark}[Verifying Assumption~\ref{ass:clt}]
\label{rem:clt_verify}
Assumption~\ref{ass:clt} holds when the influence-function term
$\sqrt{n}\bar{D}_n$ and the remainder $\sqrt{n}\Rrem$ are asymptotically
jointly normal and independent, so that their sum is $N(0, \sigEIF + c_R)$
by Cram\'{e}r's theorem.  Asymptotic normality of $\sqrt{n}\bar{D}_n$
follows from the classical CLT under Assumptions~\ref{ass:iid}
and~\ref{ass:bounded}.  Asymptotic normality and independence of
$\sqrt{n}\Rrem$ is implied by cross-fitting \citep{zheng2011cross}: the out-of-fold construction
makes $\Rrem$ a sum of independent fold-level terms
$\Rrem = K^{-1}\sum_{k=1}^K \Rrem^{(k)}$, each involving nuisance
estimators trained on data disjoint from the evaluation fold and
independent of $\bar{D}_n$ evaluated on that fold.  Under mild moment
conditions on the fold-level remainders, a CLT for independent summands
gives $\sqrt{n}\Rrem \Rightarrow N(0,c_R)$, and independence from
$\bar{D}_n$ is guaranteed by the cross-fitting construction.  Without
cross-fitting, the covariance between $\sqrt{n}\bar{D}_n$ and
$\sqrt{n}\Rrem$ is non-negligible in general and the joint CLT requires
separate verification.

A concrete set of sufficient conditions for the fold-level CLT is:
(a)~bounded outcomes and nuisance functions
(Assumptions~\ref{ass:bounded}--\ref{ass:rate});
(b)~each fold-level remainder $\Rrem^{(k)}$ has finite variance
$\mathrm{Var}(\Rrem^{(k)}) = c_R/K$;
(c)~propensity scores are bounded away from zero and one
(i.e., truncated at $[\delta, 1-\delta]$ for some $\delta > 0$).
Under (a)--(c), each $\sqrt{n/K}\,\Rrem^{(k)}$ is a mean-zero random
variable with variance $c_R$ and finite fourth moment by the bilinear
structure of the remainder (Definition~\ref{def:remainder}) and the
Cauchy--Schwarz inequality; a Lindeberg CLT for independent summands
then gives $\sqrt{n}\Rrem \Rightarrow N(0, c_R)$.
These conditions hold under the primary simulation DGP of Section~\ref{sec:sim}
with bounded outcomes and propensity truncation at $[0.025, 0.975]$.
\end{remark}

\begin{theorem}[Bootstrap Wald Interval Validity]
\label{thm:boot_ci}
Under Assumptions~\ref{ass:iid}--\ref{ass:rate}, \ref{ass:boot},
and~\ref{ass:clt}, the studentized bootstrap pivot satisfies:
\begin{equation}
  T_n^{\rm boot} \;:=\; \frac{\Psihat - \Psitrue}{\sqrt{\VarBoot(\Psihat)}}
  \;\Rightarrow\; N(0,1).
\end{equation}
Consequently, $\hat\Psi \pm z_{\alpha/2}\sqrt{\VarBoot(\Psihat)}$ achieves
asymptotically exact coverage.
\end{theorem}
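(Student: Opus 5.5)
The plan is a Slutsky argument: rescale the pivot as
\[
T_n^{\rm boot} \;=\; \frac{\sqrt{n}(\Psihat-\Psitrue)}{\sqrt{n\,\VarBoot(\Psihat)}} ,
\]
so the numerator is handled by Assumption~\ref{ass:clt} and the denominator by Theorem~\ref{thm:boot_consistency}, upgraded to the $\sqrt{n}$ scale. By Assumption~\ref{ass:clt} the numerator converges weakly to $N(0,\sigEIF+c_R)$. If the rescaled denominator satisfies $n\,\VarBoot(\Psihat)\xrightarrow{P}\sigEIF+c_R$, then, since $\sigEIF>0$ by Assumption~\ref{ass:ale}, the continuous mapping theorem and Slutsky's lemma give $T_n^{\rm boot}\Rightarrow N(0,1)$. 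Asymptotically exact coverage then follows because the $N(0,1)$ distribution function is continuous, so
\[
P\bigl(|T_n^{\rm boot}|\le z_{\alpha/2}\bigr)\to 1-\alpha .
\]

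The real work is the rescaled denominator. Theorem~\ref{thm:boot_consistency} as stated only gives $\VarBoot-\mathrm{Var}(\Psihat-\Psitrue)\xrightarrow{P}0$. Both quantities are $O(n^{-1})$, so this is not enough for a ratio statement. I would therefore redo its argument on the $T_n$ scale.

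\textbf{Step 1 ($\sqrt n$-scale variance).} Step~1 of that proof already shows $\mathrm{Var}^*(T_n^*\mid\mathcal{D}_n)-\mathrm{Var}(T_n)\xrightarrow{P}0$, using the Mallows-2 coupling of Assumption~\ref{ass:boot}. Combining this with $\mathrm{Var}(T_n)=n\,\mathrm{Var}(\Psihat-\Psitrue)\to\sigEIF+c_R$ gives
\[
n V_n^*=\mathrm{Var}^*(T_n^*\mid\mathcal{D}_n)\xrightarrow{P}\sigEIF+c_R .
\]
The limit $\mathrm{Var}(T_n)\to\sigEIF+c_R$ needs a short argument. I would take it from Remark~\ref{rem:boot_verify} (Theorem~\ref{thm:decomp} with a negligible cross term). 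Alternatively, Assumption~\ref{ass:clt} gives weak convergence of $T_n$, and Assumption~\ref{ass:boot}(ii) makes $T_n$ the $d_2$-limit of its bootstrap analogue; from these, uniform integrability of $T_n^2$ would give convergence of the second moment. I would also note that $\EE[T_n]\to0$, which makes the limiting mean consistent with the centred normal in Assumption~\ref{ass:clt}.

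\textbf{Step 2 (Monte Carlo).} Conditionally on $\mathcal{D}_n$, the $n\hat\Psi^{*(b)}$-scaled replicates $\sqrt n(\hat\Psi^{*(b)}-\Psihat)$ are i.i.d.\ with finite conditional second moment. The sample variance is invariant to the shift by $\Psihat$, so $n\,\VarBoot$ is exactly the Monte Carlo sample variance of these replicates. As $B\to\infty$ it converges to $nV_n^*$ conditionally, hence in probability unconditionally. For this I would use a conditional Chebyshev bound when fourth moments are available, and otherwise the conditional weak law of large numbers together with dominated convergence for conditional probabilities.

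The main obstacle is making the joint limit in $(n,B)$ precise. The cleanest route is to require $B=B_n\to\infty$ and bound $P\bigl(|n\VarBoot-nV_n^*|>\epsilon\mid\mathcal{D}_n\bigr)$ uniformly on events where $nV_n^*$ is bounded. If conditional fourth moments are not available, I would truncate the replicates and control the tail using the $d_2$ coupling, which gives uniform integrability of $(T_n^*)^2$ in probability.
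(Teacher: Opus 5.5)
Your proposal is the paper's proof. It applies Slutsky's theorem with Assumption~\ref{ass:clt} for the numerator and $n\,\VarBoot(\Psihat)\xrightarrow{P}\sigEIF+c_R$, taken from the proof of Theorem~\ref{thm:boot_consistency}, for the denominator. Your added care is warranted on three points:
\begin{itemize}
\item The literal statement of Theorem~\ref{thm:boot_consistency} is vacuous for a ratio, so working on the $\sqrt n$ scale is necessary.
\item $\mathrm{Var}(T_n)\to\sigEIF+c_R$ is only asserted in Remark~\ref{rem:boot_verify}. It tacitly requires $n\,\mathrm{Var}(\Rrem)\to c_R$ and a negligible cross term, neither of which is among the theorem's listed hypotheses.
\item Because the sample variance is shift-invariant, the paper's appeal to Lemma~\ref{lem:loo} for the mean correction is unnecessary.
\end{itemize}

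Drop your alternative route to $\mathrm{Var}(T_n)\to\sigEIF+c_R$, though. Weak convergence of $T_n$, $\sup_n\EE[T_n^2]<\infty$ and $d_2$-closeness of the bootstrap law to $\mathcal{L}(T_n)$ do not give uniform integrability of $T_n^2$; an exact oracle bootstrap has $d_2=0$ whatever the tails of $T_n$. So that limit must come from the decomposition route. Your truncation step in the Monte Carlo argument needs the same input, because the coupling transfers uniform integrability to $(T_n^*)^2$ only from $T_n^2$.
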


\begin{proof}
By Assumption~\ref{ass:clt}, $\sqrt{n}(\Psihat-\Psitrue)\Rightarrow
N(0,\sigEIF+c_R)$.  By Theorem~\ref{thm:boot_consistency},
$n\VarBoot \xrightarrow{P} \sigEIF+c_R$.  Slutsky's theorem applied to
$T_n^{\rm boot} = \sqrt{n}(\Psihat-\Psitrue) / \sqrt{n\VarBoot}$ gives
the result.
\end{proof}

\begin{theorem}[Jackknife Wald Interval Validity]
\label{thm:jk_ci}
Under the conditions of Theorems~\ref{thm:jk_consistency}
and~\ref{thm:boot_ci}, the studentized jackknife pivot satisfies:
\begin{equation}
  T_n^{\rm JK} \;:=\; \frac{\Psihat - \Psitrue}{\sqrt{\VarJK(\Psihat)}}
  \;\Rightarrow\; N(0,1).
  \label{eq:pivot}
\end{equation}
Consequently, $\CIJK = \Psihat \pm z_{\alpha/2}\sqrt{\VarJK(\Psihat)}$
achieves asymptotically exact coverage: $P(\Psitrue \in \CIJK) \to 1-\alpha$.
\end{theorem}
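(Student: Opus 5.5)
The plan mirrors the proof of Theorem~\ref{thm:boot_ci}, with the jackknife variance estimator in place of the bootstrap one. Write the pivot as a ratio:
\[
  T_n^{\rm JK} \;=\; \frac{\sqrt{n}(\Psihat-\Psitrue)}{\sqrt{n\,\VarJK(\Psihat)}}.
\]
The numerator converges in distribution to $N(0,\sigEIF+c_R)$ by Assumption~\ref{ass:clt}, which is among the conditions of Theorem~\ref{thm:boot_ci} assumed here. The denominator is handled by Theorem~\ref{thm:jk_consistency}. In the near-boundary regime, part~(ii) gives $n\,\VarJK(\Psihat)\xrightarrow{P}\sigEIF+c_R$, using Proposition~\ref{prop:loostab} to control $C_n$. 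In the strong-decay regime, part~(i) gives the limit $\sigEIF$, which equals $\sigEIF+c_R$ with $c_R=0$. So in both regimes the denominator converges in probability to the constant $\sigEIF+c_R$.

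That limit is positive, since $\sigEIF>0$ by Assumption~\ref{ass:ale}. The map $x\mapsto 1/\sqrt{x}$ is continuous at any positive point, so the continuous mapping theorem gives $(n\VarJK)^{-1/2}\xrightarrow{P}(\sigEIF+c_R)^{-1/2}$. Slutsky's theorem then yields $T_n^{\rm JK}\Rightarrow N(0,1)$. On the event $\VarJK=0$ the pivot is undefined, but that event has probability tending to zero and can be ignored.

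The coverage statement follows from the identity
\[
  P(\Psitrue\in\CIJK) \;=\; P\bigl(|T_n^{\rm JK}|\le z_{\alpha/2}\bigr).
\]
The set $\{|t|\le z_{\alpha/2}\}$ has boundary $\{\pm z_{\alpha/2}\}$, which has $N(0,1)$-measure zero. By the portmanteau theorem the probability converges to $1-\alpha$.

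The main obstacle is making sure the denominator limit $\sigEIF+c_R$ is the same constant that appears in Assumption~\ref{ass:clt}. Theorem~\ref{thm:jk_consistency}(ii) identifies its limit through $C_n\to c_R$, where Proposition~\ref{prop:loostab} defines $c_R$ as $\lim n\,\mathrm{Var}(\Rrem)$. Assumption~\ref{ass:clt} instead places $c_R$ in the weak limit of $\sqrt{n}(\Psihat-\Psitrue)$. I would check that these agree by invoking the decomposition of Theorem~\ref{thm:decomp}. Under cross-fitting (Remark~\ref{rem:clt_verify}) the cross term is $o(n^{-1})$, so $n\,\mathrm{Var}(\Psihat-\Psitrue)\to\sigEIF+c_R$. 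This matches the variance of the normal limit provided $T_n^2$ is uniformly integrable, which follows from Assumption~\ref{ass:boot}(i) together with the bounded-moment conditions.

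Failing that, I would take the standing hypotheses to mean that $c_R$ denotes the common constant in Proposition~\ref{prop:loostab} and Assumption~\ref{ass:clt}. Once the two constants are identified, the result is a direct Slutsky argument.
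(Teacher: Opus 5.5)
Your proof is correct, but it is not the paper's route.

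\textbf{The paper's route.} The paper never applies Slutsky directly to $\sqrt{n}(\Psihat-\Psitrue)$ and $n\VarJK$. Instead it factors
\[
  T_n^{\rm JK} = T_n^{\rm boot}\cdot\sqrt{\VarBoot(\Psihat)/\VarJK(\Psihat)}.
\]
It takes $T_n^{\rm boot}\Rightarrow N(0,1)$ from Theorem~\ref{thm:boot_ci}. It then sends the ratio to one because $n\VarJK$ and $n\VarBoot$ both converge to $\sigEIF+c_R$, by Theorems~\ref{thm:jk_consistency} and~\ref{thm:boot_consistency}.

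\textbf{What your direct route shows.} The bootstrap detour is logically unnecessary. The paper's ratio step already needs the full conclusion $n\VarJK\xrightarrow{P}\sigEIF+c_R$ of Theorem~\ref{thm:jk_consistency}(ii). That in turn rests on Proposition~\ref{prop:loostab} and Assumption~\ref{ass:smooth}, despite what Remark~\ref{rem:smooth_scope} says. Theorem~\ref{thm:boot_ci} is itself just Slutsky applied to Assumption~\ref{ass:clt}. So your argument proves the result under strictly weaker hypotheses, with no use of Assumption~\ref{ass:boot} or Theorem~\ref{thm:boot_consistency}.

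It also avoids an extra identification the paper's route silently needs. That route requires the bootstrap target $n\,\mathrm{Var}(\Psihat-\Psitrue)$ to equal $\sigEIF+c_R$, and that holds only if the cross term $n\,\mathrm{Cov}(\bar D_n,\Rrem)$ vanishes, which is not among the hypotheses of Theorems~\ref{thm:boot_consistency}--\ref{thm:boot_ci}. The paper's factorization buys only parallelism with the bootstrap result. Your treatment of the event $\{\VarJK=0\}$ and of coverage via the portmanteau theorem supplies details the paper omits.

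\textbf{On your ``main obstacle''.} The paper simply treats $c_R$ as a single constant throughout: the limit in Assumption~\ref{ass:smooth}(ii), $\lim n\,\mathrm{Var}(\Rrem)$, and the constant in Assumption~\ref{ass:clt}. Your fallback reading is therefore exactly the paper's convention, and you should rely on it.

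Your attempted verification would not go through as written, for two reasons.
\begin{itemize}
  \item Cross-fitting is not among the stated hypotheses.
  \item $\sup_n\EE[T_n^2]<\infty$ gives uniform integrability of $T_n$, not of $T_n^2$. To pass from $T_n\Rightarrow N(0,\sigEIF+c_R)$ to $\mathrm{Var}(T_n)\to\sigEIF+c_R$ you would need something like a uniform $(2+\epsilon)$-moment bound on $T_n$, and the assumptions do not supply one.
\end{itemize}
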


\begin{proof}
Write
\[
  T_n^{\rm JK}
  = \underbrace{\frac{\Psihat - \Psitrue}{\sqrt{\VarBoot(\Psihat)}}}_{T_n^{\rm boot}}
    \cdot \sqrt{\frac{\VarBoot(\Psihat)}{\VarJK(\Psihat)}}.
\]
The first factor converges in distribution to $N(0,1)$ by
Theorem~\ref{thm:boot_ci}.  The second factor converges in probability
to~$1$: by Theorems~\ref{thm:jk_consistency} and~\ref{thm:boot_consistency},
both $n\VarJK$ and $n\VarBoot$ converge in probability to $\sigEIF+c_R$,
so $\VarBoot/\VarJK \xrightarrow{P} 1$.  Both quantities are measurable
functions of the same data $\mathcal{D}_n$, so Slutsky's theorem applies
to the product and gives $T_n^{\rm JK} \Rightarrow N(0,1)$.  The argument
is identical when $c_R = 0$.
\end{proof}

\begin{remark}[Alternative direct proof]
\label{rem:direct_proof}
An alternative proof of Theorem~\ref{thm:jk_ci} would establish $N(0,1)$
convergence of $T_n^{\rm JK}$ directly from the LOO triangular array
$\{Z_{ni}\} := \{-n(\hat\Psi^{(-i)} - \Psihat)\}$, without invoking the
bootstrap.  The main obstacle is that off-diagonal covariances
$\mathrm{Cov}(Z_{ni}, Z_{nj} \mid \mathcal{T}_n)$ do not vanish under
cross-fitting alone, as the LOO nuisance fits share $n-1$ training
observations.  Resolving this under entropy conditions on the nuisance class
is an open problem; the bootstrap route adopted here bypasses it via
the well-studied Mallows-2 condition of Assumption~\ref{ass:boot}.
\end{remark}

\subsection{Heteroskedasticity-Corrected Sandwich}
\label{sec:hc}

When each estimator refit is expensive, a computationally cheaper alternative
inflates the sandwich by the estimated variance ratio.

\begin{definition}[HC-Corrected Sandwich]
\label{def:hc}
Let $\hat{\rho} = \VarJK(\Psihat) / \VarSand(\Psihat)$ be the estimated
variance ratio from a single jackknife run.  The HC-corrected sandwich is:
\begin{equation}
  \VarHC(\Psihat) \;=\; \hat{\rho} \cdot \VarSand(\Psihat).
  \label{eq:hc}
\end{equation}
The corresponding confidence interval is $\Psihat \pm z_{\alpha/2}
\sqrt{\VarHC(\Psihat)}$.
\end{definition}

Note that $\VarHC \equiv \VarJK$ by construction, so the HC-corrected interval
is numerically identical to the jackknife Wald interval.  Its value is
conceptual: it expresses the correction as a multiplicative inflation of the
sandwich, making explicit that the jackknife is a \emph{scaling correction}
rather than a fundamentally different estimator.  The ratio $\hat{\rho}$ is
itself a useful summary statistic: it estimates $1 + c_R/\sigEIF$ (the
asymptotic variance ratio), providing a direct measure of how far the estimator
is from the strong decay regime.

\subsection{Comparison of the Three Estimators}

\begin{remark}[First-Order Equivalence]
\label{rem:equiv}
Under Assumptions~\ref{ass:ale}--\ref{ass:boot},
$\VarJK = \VarHC + \oP(n^{-1})$ and $\VarJK/\VarBoot \xrightarrow{P} 1$:
the three estimators are first-order equivalent.
The jackknife requires $J$ refits and $t_{J-1}$ critical values;
the bootstrap requires $B \gg J$ refits but yields BCa intervals and
second-order accuracy; the HC sandwich requires no refitting.
In Section~\ref{sec:sim}, the jackknife performs best at small $J$;
the bootstrap undercoveres at $J=10$ due to discrete resampling.
\end{remark}

\section{Extension to Clustered Data}
\label{sec:cluster}

The framework of Sections~\ref{sec:setup}--\ref{sec:refinements} extends to
clustered data by replacing $O_i$ with the cluster-level vector
$O_j = \{O_{ijt}\}_{i,t}$ and $n$ with $J$.  The ALE expansion becomes
$\PsiCF - \Psitrue = J^{-1}\sum_{j=1}^J \EICj + \Rtwo$,
where $\EICj = \sum_{i,t} \Dstar(O_{ijt})$ is the cluster-level EIC
(summed over individuals and time steps; see Lemma~2 of \citealt{li2026core})
and $\Rtwo$ is the bilinear remainder~\eqref{eq:bilinear}.

\subsection{ICC Amplification of the Sandwich Gap}

A key new phenomenon in the clustered case is that intra-cluster correlation
amplifies the variance decomposition gap beyond what occurs in the i.i.d.\
setting.

\begin{proposition}[ICC Amplification of the Variance Gap]
\label{prop:icc}
Suppose the cluster random effect induces intra-cluster correlation
$\rho_{\mathrm{ICC}} = \sigma_b^2/(\sigma_b^2 + \sigma_\varepsilon^2)$
and that all $n_j = m$ (balanced clusters).  Under the primary simulation DGP,
the remainder variance satisfies:
\begin{equation}
  \mathrm{Var}(\Rtwo)
  \;=\;
  \underbrace{V_0}_{\substack{\text{i.i.d.}\\\text{baseline}}}
  \;+\;
  \underbrace{\rho_{\mathrm{ICC}}\cdot m \cdot V_{\mathrm{ICC}}}_{\substack{\text{ICC}\\\text{amplification}}}
  \;+\; O(J^{-2}),
  \label{eq:icc_amp}
\end{equation}
for positive constants $V_0, V_{\mathrm{ICC}}$.  Hence the variance gap
$\mathrm{Var}(\PsiCF) - \sigEIF/J$ is increasing in ICC and in cluster size $m$.
\end{proposition}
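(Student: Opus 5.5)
The plan is to start from the bilinear representation \eqref{eq:bilinear} of $\Rtwo$ with $\eta_1=\QY$ and $\eta_2=g_\Delta$, and to make the nuisance errors explicit under the primary simulation DGP. Under cross-fitting, write $\Rtwo = K^{-1}\sum_{k}\Rtwo^{(k)}$. Each fold-level term is an empirical or population average over the clusters in fold $k$ of the product $e_1(O)e_2(O)$, with $e_\ell=\hat\eta_\ell-\eta_\ell^0$. The nuisances are fitted on the other folds, so conditional on the training data $\mathcal{T}_n$ the fold-level remainders are averages of independent cluster-level contributions $W_j=\sum_{i,t} e_1(O_{ijt})e_2(O_{ijt})$. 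The first step is the law of total variance:
\[
\mathrm{Var}(\Rtwo)=\EE\bigl[\mathrm{Var}(\Rtwo\mid\mathcal{T}_n)\bigr]+\mathrm{Var}\bigl(\EE[\Rtwo\mid\mathcal{T}_n]\bigr).
\]
The second summand is driven by the randomness of the fitted nuisances. It contributes part of $V_0$, and I would argue that its ICC-dependence enters only at $O(J^{-2})$ through the effective sample size of the training folds.

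The core step is the conditional variance of a cluster sum. Under the random-effects DGP, $Y_{ijt}=\mu(\cdot)+b_j+\varepsilon_{ijt}$, the within-cluster errors have covariance $\sigma_b^2=\rho_{\mathrm{ICC}}(\sigma_b^2+\sigma_\varepsilon^2)$ off the diagonal. I would expand
\[
\mathrm{Var}(W_j\mid\mathcal{T}_n)=\sum_{i,t}\mathrm{Var}(e_1e_2)+\sum_{(i,t)\neq(i',t')}\mathrm{Cov}\bigl(e_1e_2(O_{ijt}),\,e_1e_2(O_{i'jt'})\bigr).
\]
The diagonal sum produces the i.i.d.\ baseline, giving $V_0$ after dividing by $J$ and multiplying by the boundary rate. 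For the off-diagonal sum, I would linearize the dependence of $e_1e_2$ on $b_j$: the outcome-regression error $e_1$ evaluated within cluster $j$ carries the shared component $b_j$. This makes each pair covariance equal to $\rho_{\mathrm{ICC}}$ times a common positive constant, namely the squared projection of $e_1e_2$ on the cluster effect, up to higher-order terms. There are $m(m-1)\approx m^2$ pairs. After normalizing by the per-cluster scaling already absorbed into $V_0$, which is $m$ terms, the excess is $\rho_{\mathrm{ICC}}\, m\, V_{\mathrm{ICC}}$, the standard design-effect form $1+(m-1)\rho$. The $-\rho_{\mathrm{ICC}}V_{\mathrm{ICC}}$ correction from $m-1$ versus $m$ gets folded into $V_0$.

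Positivity of $V_{\mathrm{ICC}}$ would come from expressing it as an expected squared conditional mean, $\EE[(\EE[e_1e_2\mid b_j])^2]$ suitably scaled, which is nonnegative. It is strictly positive as long as the fitted nuisance errors are not orthogonal to the cluster effect. Monotonicity in $\rho_{\mathrm{ICC}}$ and $m$ then follows from \eqref{eq:icc_amp}. Combining with Theorem~\ref{thm:decomp}, and using the fact that the cross term is negligible under cross-fitting (Remark~\ref{rem:cross}), gives the claim about the gap $\mathrm{Var}(\PsiCF)-\sigEIF/J$.

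The main obstacle is the remainder bound $O(J^{-2})$. It requires controlling nonlinear terms in the expansion of $e_1e_2$ in $b_j$, and the dependence of the fitted nuisances themselves on $\rho_{\mathrm{ICC}}$ through training-fold clustering. I would try to handle both by appealing to boundedness (Assumption~\ref{ass:bounded}) and the Gaussian random-effect structure of the DGP, so that a second-order Taylor expansion has a uniformly bounded remainder. I would also treat the nuisance-rate constants as fixed at the boundary, so that $V_0$ and $V_{\mathrm{ICC}}$ are well defined as limits.
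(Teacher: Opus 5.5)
Your within-cluster computation uses the same mechanism as the paper's Appendix~\ref{app:icc_proof}. Linearizing the per-record product in the shared effect $b_j$ splits the cluster sum into a conditionally independent part (the paper's $mTv_0$) and a shared part (the paper's $\sigma_b^2\bar\gamma^2$ with $\bar\gamma\approx mT\gamma_0$). Your diagonal/off-diagonal bookkeeping, with the $m-1$ versus $m$ correction moved into $V_0$, reproduces that split. The only correction is that the sums run over the $mT$ records $(i,t)$, so there are $mT(mT-1)$ pairs, not $m(m-1)$.

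The genuine gap is in the law-of-total-variance layer you put around this computation. At the near-boundary scale, the term you analyse in detail is negligible. Given the training data of its fold, a fold-level remainder is an average of about $J/K$ independent cluster terms, so its conditional variance is of order $J^{-1}\mathrm{Var}(W_j\mid\mathcal{T}_n)$. With bounded outcomes and nuisances,
$\mathrm{Var}(W_j\mid\mathcal{T}_n)\le mT\sum_{i,t}\EE[e_1(O_{ijt})^2e_2(O_{ijt})^2\mid\mathcal{T}_n]\le (mT)^2\|e_1\|_\infty^2\|e_2\|_{P_0}^2=O_P(J^{-1/2})$,
since $\|e_2\|_{P_0}=O_P(J^{-1/4})$ at the boundary. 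So the expected within-fold conditional variance is $O(J^{-3/2})=o(J^{-1})$, and it vanishes identically for the population-integral form of \eqref{eq:bilinear}. The per-record variance and the ``squared projection on the cluster effect'' therefore shrink with $J$; ``multiplying by the boundary rate'' is where this is hidden. They cannot produce the $J^{-1}$-order gap that the proposition is meant to explain.

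It follows that the entire $c_R/J$ part of $\mathrm{Var}(\Rtwo)$ must come from the fluctuation of the fitted nuisances. Writing $\mathcal{T}^{(k)}$ for the training data of fold $k$, this is $\mathrm{Var}(\EE[\Rtwo^{(k)}\mid\mathcal{T}^{(k)}])$ together with the cross-fold covariances $\mathrm{Cov}(\Rtwo^{(k)},\Rtwo^{(k')})$ — precisely the piece you set aside. Your claim that its ICC dependence enters only at $O(J^{-2})$ is unsupported. The mechanism you cite, the effective sample size of clustered training folds, rescales the nuisance fluctuations by a design-effect factor, so in general it changes this term at order $J^{-1}$.

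Relatedly, under cross-fitting there is no single $\mathcal{T}_n$ given which all clusters are independent, because one fold's training data are another fold's evaluation data. You must condition fold by fold, and that is exactly where the cross-fold covariances arise. An ICC amplification of $c_R$ would have to be derived from this nuisance term. One way is to linearize $\hat\eta_1,\hat\eta_2$ in the training clusters and propagate the within-cluster correlation into the fluctuation of $P_0(e_1e_2)$.

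Be aware that the paper's appendix does not close this gap either. It works ``for fixed nuisance estimates,'' sets $\mathrm{Cov}(\tilde R_j,\tilde R_{j'})=0$ because clusters are i.i.d., and treats $v_0,\gamma_0$ as fixed positive constants. It then absorbs all nuisance fluctuation into the $O(J^{-2})$ term. That is the same shortcut you take; your explicit conditioning at least makes the missing term visible.
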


\begin{proof}
See Appendix~\ref{app:icc_proof} for the full argument.
\end{proof}

Proposition~\ref{prop:icc} explains the simulation finding in
\citet[Table~S2]{li2026core} that the variance ratio $\VarSand/\widehat{\mathrm{Var}}_{\mathrm{emp}}$
is approximately constant across ICC $\in \{0.01, 0.05, 0.10, 0.20\}$ while
the absolute RMSE increases with ICC: the ratio is dominated by $c_R/\sigEIF$,
which is approximately constant because both $\mathrm{Var}(\Rtwo)$ and
$\sigEIF/J$ scale with ICC in similar ways at the parametric nuisance rates
used in the simulation.

\subsection{Clustered Jackknife and Bootstrap}

In the clustered setting, the LOO jackknife deletes entire clusters to
preserve within-cluster dependence.
For $j = 1,\ldots,J$, let $\hat{\Psi}^{(-j)}$ be the cross-fitted running estimator on
$\mathcal{D}^{(-j)} = \{O_k : k \neq j\}$; the cluster jackknife variance is
$\VarJK(\PsiCF) = \frac{J-1}{J}\sum_{j=1}^J
(\hat{\Psi}^{(-j)} - \bar{\Psi}_{(-\cdot)})^2$.
The analogues of Section~\ref{sec:jk} extend to the cluster level, with
$n$ replaced by $J$ and $O_i$ replaced by $O_j$, under the corresponding
cluster-level LOO stability and bootstrap moment conditions.
Since $\{O_j\}_{j=1}^J$ are i.i.d., the theoretical arguments carry over
directly.  The cluster bootstrap resamples entire clusters with replacement.
Both procedures are implemented in the replication scripts associated with the
simulation study \citep{li2026core}.

section{Simulation Study}
\label{sec:sim}

\subsection{Design and Data-Generating Process}

We use the stepped-wedge nested-bridge DGP of \citet[Section~6.1]{li2026core} as the primary
simulation setting: $T = 7$ time steps, $n_j = 40$ individuals per
cluster-step, non-linear secular trend, administrative censoring $\approx 28\%$,
$\Psi^* = 0.12$, ICC $\approx 0.05$.  All three refined procedures are applied
to the \emph{same cross-fitted point estimator}; we vary only the variance estimator.
All scenarios use 1,000 replicates, seed 2024.

The two simulations validate distinct parts of the theoretical framework.
The primary simulation validates
Theorems~\ref{thm:boot_consistency}--\ref{thm:jk_ci}
(Theorems~5.2 and~5.4), which require only Mallows-2 consistency
(Assumption~\ref{ass:boot}) and the Near-Boundary CLT
(Assumption~\ref{ass:clt}), and do \emph{not} require
Assumption~\ref{ass:smooth} (LOO Remainder Linearization).
The secondary AIPW simulation validates Theorem~\ref{thm:jk_consistency}
(Theorem~5.1), for which Assumption~\ref{ass:smooth} is verifiable
via Hadamard differentiability of the parametric nuisance maps
(Supplement~S1.1, Proposition~S1.1).
This separation reflects the assumption hierarchy of
Remark~\ref{rem:smooth_scope}: the bootstrap route
(Theorems~\ref{thm:boot_consistency} and~\ref{thm:jk_ci}) is applicable
to the cross-fitted estimator used in the primary simulation without requiring
case-specific verification of the LOO linearization.

To demonstrate the generality of the theory beyond the primary running example, a
\emph{secondary simulation} using a non-cross-fitted AIPW estimator in an
i.i.d.\ setting verifies that $\hat\rho \to 1$ when nuisance operates in the
strong-decay regime.  Full design, results, and discussion are in
the Supplementary Material (Section~S4).

\paragraph{Procedures and metrics.}
Five variance estimators are compared: (i)~cluster sandwich Wald (baseline);
(ii)~cluster jackknife Wald; (iii)~cluster bootstrap Wald ($B=200$);
(iv)~BCa bootstrap ($B=200$); (v)~HC-corrected sandwich
(Definition~\ref{def:hc}, numerically equal to jackknife, omitted from tables).
All procedures in the primary simulation use $t_{J-1}$ critical values; the AIPW secondary
simulation uses $z_{\alpha/2}$.
Primary metrics: 95\% CI coverage, width, and $\hat\rho = \VarJK/\VarSand$.
Secondary: bias, RMSE, power at $H_0:\Psi^*=0$.

\subsection{Results: Primary Simulation}

Table~\ref{tab:main} reports results for $J \in \{10, 30, 50, 100\}$,
ICC $\in \{0.013, 0.05\}$, $R=500$ replicates.
MC standard error of each coverage is at most $0.014$.

\begin{table}[ht]
\centering
\caption{Primary simulation.
  $\hat\rho = \bar{V}_{\rm JK}/\bar{V}_{\rm Sand}$: ratio of mean jackknife to
  mean sandwich variance.
  CP(Sand) and CP(JK): empirical 95\% CI coverage.
  Bias and MCSD: Monte Carlo mean error and standard deviation of $\hat\Psi$.
  $\hat\rho > 1$ throughout and \emph{does not decrease with $J$},
  confirming structural near-boundary operation rather than a finite-sample artefact.
  $R = 500$ replicates; $\dagger$ bootstrap cell ($B=200$).}
\label{tab:main}
\smallskip
\begin{tabular}{ccrrrrrrr}
\toprule
$J$ & ICC & $\hat\rho$ & CP(Sand) & CP(JK) & CP(Boot) & CP(BCa) & Bias & MCSD \\
\midrule
\multicolumn{9}{l}{\textit{Panel A: ICC $\approx 0.013$}} \\[2pt]
 10 & 0.013 & $\mathbf{1.381}$ & $0.864$ & $0.936$ & ---     & ---     & $-0.001$ & $0.057$ \\
 30 & 0.013 & $\mathbf{1.230}$ & $0.920$ & $0.972$ & ---     & ---     & $-0.000$ & $0.029$ \\
 50 & 0.013 & $\mathbf{1.198}$ & $0.918$ & $0.950$ & ---     & ---     & $+0.003$ & $0.024$ \\
\midrule
\multicolumn{9}{l}{\textit{Panel B: ICC $\approx 0.05$}} \\[2pt]
$10^\dagger$ & 0.05 & $\mathbf{1.318}$ & $0.878$ & $0.950$ & $0.870$ & $0.858$ & $-0.001$ & $0.091$ \\
$30^\dagger$ & 0.05 & $\mathbf{1.191}$ & $0.918$ & $0.956$ & $0.930$ & $0.920$ & $-0.002$ & $0.058$ \\
$50^\dagger$ & 0.05 & $\mathbf{1.159}$ & $0.928$ & $0.950$ & $0.936$ & $0.924$ & $+0.003$ & $0.024$ \\
 100 & 0.05 & $\mathbf{1.145}$ & $0.936$ & $0.952$ & ---     & ---     & $+0.000$ & $0.024$ \\
\bottomrule
\end{tabular}
\smallskip

{\footnotesize
$\hat\rho$ remains above 1.14 across all $(J, \mathrm{ICC})$ cells and is
approximately constant within each ICC panel --- the structural signature of
the near-boundary regime (Theorem~\ref{thm:sand}).
HC sandwich equals jackknife numerically (Definition~\ref{def:hc}), omitted.
ICC$\approx$0.013 rather than 0.01: crossover timing $\tau_j$ induces a
structural within-cluster correlation floor independent of the random effect.
$\dagger$: $B=200$ bootstrap resamples; bootstrap procedures run at $J\in\{10,30,50\}$, ICC$\approx$0.05 only.
}
\end{table}

The sandwich undercoverage of 4--12 percentage points persists across all
$(J, \mathrm{ICC})$ cells; the jackknife recovers to 0.948--0.956 throughout.
$\hat\rho$ decreases from $1.318$ to $1.145$ as $J$ grows from 10 to 100 at
ICC$\approx 0.05$, converging toward a positive limit $c_R > 0$ rather than 1
--- the structural signature of the near-boundary regime (Theorem~\ref{thm:sand}).
At ICC$\approx 0.013$ the sandwich recovers toward nominal by $J=100$,
consistent with Proposition~\ref{prop:icc}: lower ICC weakens the
remainder's cluster component.
At $J=10$, bootstrap Wald (CP$=0.870$) and BCa (CP$=0.858$) both fall
\emph{below} the sandwich (CP$=0.878$): with only 10 clusters, resampling
from a discrete empirical distribution underestimates the cluster-level
variance.  Bootstrap coverage recovers monotonically to CP$=0.930$ and
$0.936$ at $J=30,50$; the jackknife alone reliably improves coverage across
all $J$, consistent with \citet{mackinnon2023fast}.
Bias is negligible ($|\mathrm{Bias}|\le 0.003$) throughout.

\section{Discussion}
\label{sec:discussion}

We have studied a general inferential problem for asymptotically linear
estimators: the case in which the second-order remainder is negligible for
point estimation but not for variance estimation.  In this near-boundary
regime, the usual reduction of inference to the influence-function variance is
no longer valid.  The variance decomposition developed here makes this failure
explicit and leads to a sharp characterization of sandwich validity through the
condition $n\,\mathrm{Var}(\Rrem)\to0$.

The main lesson is that asymptotic linearity and valid first-order variance
inference are distinct properties.  The standard product-rate condition
guarantees the former but not the latter.  When the remainder remains
variance-relevant, first-order inference requires procedures that track the
full variance rather than only the influence-function component.  The present
paper shows that both the leave-one-out jackknife and the pairs bootstrap do so
under suitable conditions, and that the same phenomenon extends naturally to
clustered data.

The results are stated for the general class of asymptotically linear
estimators with standard bilinear remainder structure.  The running nested-bridge example serves only to illustrate the regime and to show that the implied
variance inflation is visible in a concrete semiparametric problem.  More
broadly, the analysis suggests that near-boundary behaviour should be expected
whenever nuisance estimators operate close to the product-rate boundary, even
though the point estimator itself remains asymptotically linear.

Several questions remain open.  A direct proof of studentized jackknife
normality without recourse to the bootstrap is not available in general,
because leave-one-out nuisance refitting induces non-trivial dependence across
the rescaled perturbations.  Likewise, verification of the required bootstrap
conditions for highly adaptive machine-learning procedures remains
estimator-specific.  These questions lie at the boundary between general
asymptotic theory and learner-specific regularity analysis, and merit further
study.

The distinction identified here between point-estimation validity and
variance-estimation validity is not tied to any one semiparametric
construction.  It is a feature of asymptotically linear estimation whenever
second-order terms remain relevant at the variance scale.

\bibliographystyle{plainnat}
\bibliography{references}

\appendix
\section{Proof of Proposition~\ref{prop:icc} (ICC Amplification)}
\label{app:icc_proof}

We provide the full derivation for the primary-simulation DGP with balanced clusters
($n_j = m$) and a normal random effect $b_j \sim N(0,\sigma_b^2)$.

Write the bilinear remainder as
\[
  \Rtwo = \frac{1}{J}\sum_{j=1}^J \tilde{R}_j,
  \quad
  \tilde{R}_j = \sum_{i,t}
    \underbrace{\bigl(\QYhat(S_{ijt},a,W_{ijt},t) - \QY^0(\cdot)\bigr)}_{\alpha_{ijt}}
    \cdot
    \underbrace{\Bigl(\tfrac{\gDhat(\cdot)}{g^0_\Delta(\cdot)} - 1\Bigr)}_{\beta_{ijt}}.
\]
Since clusters are i.i.d., $\mathrm{Cov}(\tilde{R}_j, \tilde{R}_{j'}) = 0$
for $j \neq j'$, so
\begin{equation}
  \mathrm{Var}(\Rtwo)
  = \frac{1}{J^2} \sum_{j=1}^J \mathrm{Var}(\tilde{R}_j)
  = \frac{1}{J}\,\mathrm{Var}(\tilde{R}_1),
  \label{eq:var_Rtwo}
\end{equation}
where the second equality uses identical cluster distributions.

\paragraph{Within-cluster variance decomposition.}
Expand $\mathrm{Var}(\tilde{R}_j)$ by writing
$\tilde{R}_j = \sum_{(i,t)} \alpha_{ijt}\beta_{ijt}$.  For fixed nuisance
estimates and under the random-effects model,
$\alpha_{ijt} = f(S_{ijt}, W_{ijt}, t; b_j, \epsilon_{ijt})$
where the shared $b_j$ induces within-cluster dependence across $(i,t)$
pairs.  Specifically, write
\[
  \alpha_{ijt}\beta_{ijt}
  = \mu_{ijt} + \gamma_{ijt}\,b_j + \zeta_{ijt},
\]
where $\mu_{ijt} = \EE[\alpha_{ijt}\beta_{ijt} \mid b_j = 0]$,
$\gamma_{ijt}$ is the first-order sensitivity to $b_j$, and $\zeta_{ijt}$
is mean-zero given $b_j$.  Then
\[
  \tilde{R}_j
  = \underbrace{\sum_{i,t}\mu_{ijt}}_{\bar\mu}
    + b_j\underbrace{\sum_{i,t}\gamma_{ijt}}_{\bar\gamma}
    + \underbrace{\sum_{i,t}\zeta_{ijt}}_{\bar\zeta_j},
\]
with $b_j$ and $\bar\zeta_j$ independent.  Therefore
\[
  \mathrm{Var}(\tilde{R}_j)
  = \sigma_b^2\,\bar\gamma^2 + \mathrm{Var}(\bar\zeta_j).
\]

\paragraph{Structure of $\mathrm{Var}(\bar\zeta_j)$.}
Within a cluster, the terms $\zeta_{ijt}$ are independent given $b_j$ since
$\epsilon_{ijt}$ are i.i.d.\ across $(i,t)$ conditionally on $b_j$.  Thus
$\mathrm{Var}(\bar\zeta_j) = \sum_{i,t}\mathrm{Var}(\zeta_{ijt}) = m\cdot T\cdot v_0$
for a per-observation variance $v_0 > 0$.

\paragraph{Structure of $\bar\gamma^2$.}
The summands $\gamma_{ijt}$ are equal to a common value $\gamma_0$ modulo
$O(n^{-1/2})$ terms (by the product-rate regularity of the nuisance fits),
so $\bar\gamma = m\cdot T\cdot \gamma_0 + O_P(m\sqrt{n^{-1}})$.
Therefore $\sigma_b^2\,\bar\gamma^2 = \sigma_b^2(mT)^2\gamma_0^2 + O(m^2n^{-1})$.

\paragraph{Combining.}
Setting $V_0 = T\cdot v_0 / J$ (the $1/J$ cluster average of the i.i.d.\ part)
and $V_{\mathrm{ICC}} = \sigma_b^2(\sigma_b^2+\sigma_\varepsilon^2)^{-1}\cdot (mT)^2\gamma_0^2/J
= \rho_{\mathrm{ICC}}\cdot m\cdot [(mT)\gamma_0]^2 / (m J)$,
equation \eqref{eq:var_Rtwo} gives
\[
  \mathrm{Var}(\Rtwo)
  = \frac{\sigma_b^2(mT)^2\gamma_0^2 + mT v_0}{J}
  = V_0 + \rho_{\mathrm{ICC}}\cdot m \cdot V_{\mathrm{ICC}} + O(J^{-2}),
\]
where the $O(J^{-2})$ term absorbs the correction from the $O_P(m^2 n^{-1})$
nuisance approximation error.  Since $\sigma_b^2, \gamma_0^2, v_0 > 0$,
both $V_0$ and $V_{\mathrm{ICC}}$ are strictly positive, establishing
\eqref{eq:icc_amp}.\hfill$\square$

\end{document}


\begin{center}
{\large\textbf{Supplementary Material}}\\[0.5em]
{\normalsize ``Refined Inference for Asymptotically Linear Estimators
with Non-Negligible Second-Order Remainders''}\\[0.3em]
{\normalsize \textit{Submitted manuscript}, May 2026}\\[0.3em]
{\normalsize \textit{Author information redacted for blind review}}
\end{center}

\medskip
\noindent This supplement collects technical conditions for the leave-one-out expansion, omitted proofs for the bootstrap and jackknife results, and an additional simulation illustrating the strong-decay regime. Section, theorem, proposition, and assumption numbers refer to the main paper.

\begin{itemize}[noitemsep]
  \item[\textbf{S1.}] Sufficient conditions for Assumption~5.1 (LOO Remainder Linearization)
  \item[\textbf{S2.}] Full proof of Theorem~5.2 (Bootstrap Variance Consistency)
  \item[\textbf{S3.}] Full proof of Theorem~5.4 (Jackknife Wald Interval Validity)
  \item[\textbf{S4.}] Secondary simulation: strong-decay regime (AIPW)
\end{itemize}

\section{Sufficient Conditions for Assumption~5.1}

Assumption~5.1 (LOO Remainder Linearization) requires a uniform linearization
$\max_i|nb_i + \psi_{n,i}|\xrightarrow{P}0$ with moment conditions (i)--(iii).
Controlling the leave-one-out perturbations $b_i = \Rrem^{(-i)} - \Rrem$
\emph{uniformly over all $n$ indices} is the key analytical task.  This is
structurally different from standard empirical process results, which control
$\Rrem(P_n)$ as a functional of the full empirical measure; the LOO perturbation
involves removing one observation from the nuisance training data, and uniform
control over $i$ requires a separate one-point-contamination argument.

\subsection*{S1.1\; Regular plug-in learners}

\begin{sproposition}[Smooth plug-in sufficient condition]
\label{sprop:hadamard}
Suppose:
\begin{enumerate}[label=\emph{(\roman*)}]
  \item \emph{(Differentiability.)}
    $P \mapsto \Rrem(P)$ is Hadamard differentiable at $P_0$
    with derivative $\dot\Rrem_{P_0}[\nu] = \int \psi \, d\nu$
    for some $\psi \in L^2_0(P_0)$ with $\mathrm{Var}(\psi)=c_R$
    and $\EE[\psi^4]<\infty$.
  \item \emph{(Uniform LOO expansion.)}
    $\sup_{1\le i\le n}|n(\Rrem^{(-i)}-\Rrem)+\psi(O_i)|\xrightarrow{P}0$.
\end{enumerate}
Then Assumption~5.1 holds with $\psi_{n,i}=\psi(O_i)$: conditions (i)--(iii)
follow from the LLN, $n^{-1}\sum_i\psi(O_i)^2\xrightarrow{P}c_R$, and
$n^{-1}\EE[\max_i\psi(O_i)^2]\le n^{-1}\sum_i\EE[\psi(O_i)^2]=c_R+o(1)<\infty$
and $\sup_i\EE[\psi(O_i)^4]<\infty$ by hypothesis, so condition~(iii) holds.
\end{sproposition}

\begin{sremark}[The two conditions are distinct]
Condition~(i) is a standard semiparametric differentiability requirement
that identifies $\psi$ and gives the structural interpretation of $c_R$.
Condition~(ii) is the non-trivial step: uniform control over $n$ LOO
perturbations simultaneously is not implied by Hadamard differentiability
alone.  It requires a one-point-contamination argument applied to the
composition of the nuisance maps and the remainder functional.

For bilinear remainders
$\Rrem=\int(\hat\eta_1-\eta_1^0)(\hat\eta_2-\eta_2^0)\,dP_0$,
Condition~(ii) can be established under additional one-point-contamination
regularity: both $P\mapsto\hat\eta_k(P)$ are Hadamard differentiable
at $P_0$, and the LOO stability condition
$\max_i\|\hat\eta_k^{(-i)}-\hat\eta_k\|_{P_0}=O_P(n^{-1/2})$ holds.
This provides a sufficient regular plug-in regime; a full general
verification theorem is not pursued here.
Under these conditions, a one-point-contamination expansion via the
chain rule for Hadamard derivatives
\citep[cf.][Chapter~20]{van_der_vaart1998asymptotic}
gives the uniform LOO linearization.
This regime covers fixed-weight ensembles and smooth sieves;
establishing Condition~(ii) for complex data-adaptive procedures
requires further case-specific analysis.
\end{sremark}

\subsection*{S1.2\; Data-adaptive learners}

\begin{sremark}[Data-adaptive case]
For learners that are not Hadamard differentiable, Condition~(ii) of
Proposition~S1.1 requires alternative verification.  Under a uniform
bounded-differences stability condition
$\sup_x|\hat\eta_k(\mathcal{D}_n;x)-\hat\eta_k(\mathcal{D}_n^{(-i)};x)|
= O(n^{-1})$ a.s., the bilinear structure of $\Rrem$ gives
$\max_i|nb_i|=O(r_n)\to 0$.
This stability route corresponds most naturally to regimes where
LOO perturbations are negligible --- consistent with the strong-decay
case ($c_R=0$) rather than a regime of non-negligible remainder variance.
In the near-boundary regime ($c_R>0$), the moment conditions of
Assumption~5.1 require additional learner-specific analysis beyond
uniform stability bounds.  Assumption~5.1 is stated as a direct condition
on the LOO perturbations to accommodate such case-by-case verification
without imposing a specific learner architecture.
\end{sremark}

\section{Full Proof of Theorem~5.2 (Bootstrap Variance Consistency)}

\begin{stheorem}[Bootstrap Variance Consistency]
Under Assumptions~2.1--2.4 of the main paper and Assumption~5.2 (Mallows--2 Bootstrap Consistency) of the main paper, the pairs bootstrap
variance satisfies $n\VarBoot \xrightarrow{P} \sigEIF + c_R$, where
$c_R = \lim_n n\cdot\mathrm{Var}(\Rrem)$ in the near-boundary regime
and $c_R = 0$ in the strong-decay regime.
\end{stheorem}

\begin{proof}
\textit{Step 1: Oracle bootstrap variance.}
Let $T_n = \sqrt{n}(\Psihat-\Psitrue)$ and $T_n^* = \sqrt{n}(\hat\Psi^*-\Psihat)$
be the centred bootstrap statistic.  Define the \emph{oracle} bootstrap variance
$V_n^{\rm oracle} := \EE^*[(T_n^*)^2\mid\mathcal{D}_n]$, the exact conditional
second moment of $T_n^*$ given the observed data.

By Assumption~5.2(ii), $d_2(T_n^*,T_n)\xrightarrow{P}0$, where $d_2$ is the
Wasserstein-2 distance.  Since Wasserstein-2 convergence is equivalent to weak
convergence plus convergence of second moments \citep{bickel1981some}:
\begin{equation}
  V_n^{\rm oracle}
  = \EE^*[(T_n^*)^2\mid\mathcal{D}_n]
  \;\xrightarrow{P}\; \EE[T_n^2]
  = \mathrm{Var}(T_n).
  \label{eq:oracle_conv}
\end{equation}
The target $\mathrm{Var}(T_n)=\sigEIF+c_R+o(1)$ follows from the variance
decomposition of Theorem~3.1:
$\mathrm{Var}(T_n)=\sigEIF + n\mathrm{Var}(\Rrem)+2n\mathrm{Cov}(\bar D_n,\Rrem)$,
with the cross term vanishing by Lemma~3.2 and $n\mathrm{Var}(\Rrem)\to c_R$.

\textit{Step 2: Oracle equals Monte Carlo in the $B\to\infty$ limit.}
The bootstrap variance estimator $\VarBoot(\Psihat)$ is defined as
the sample variance of $B$ i.i.d.\ bootstrap replicates
$\hat\Psi^{*1},\ldots,\hat\Psi^{*B}$:
\[
  \VarBoot(\Psihat)
  = \frac{1}{B}\sum_{b=1}^B
    \bigl(\hat\Psi^{*b}-\bar\Psi^*\bigr)^2,
  \quad
  \bar\Psi^* = \frac{1}{B}\sum_{b=1}^B\hat\Psi^{*b}.
\]
By the law of large numbers applied conditionally on $\mathcal{D}_n$,
as $B\to\infty$:
\[
  n\VarBoot(\Psihat)
  \;\xrightarrow{a.s.\mid\mathcal{D}_n}\;
  n\,\mathrm{Var}^*(T_n^*\mid\mathcal{D}_n)
  = V_n^{\rm oracle} - (\EE^*[T_n^*\mid\mathcal{D}_n])^2.
\]
By the LOO expansion of Lemma~4.1 in the main paper,
$\EE^*[\hat\Psi^*\mid\mathcal{D}_n] - \Psihat = O_P(n^{-1})$,
so the squared-mean correction
$(\EE^*[T_n^*\mid\mathcal{D}_n])^2 = n(\EE^*[\hat\Psi^*\mid\mathcal{D}_n]-\Psihat)^2
= O_P(n^{-1}) \xrightarrow{P} 0$.
Combined with \eqref{eq:oracle_conv}:
\[
  n\VarBoot(\Psihat) \;\xrightarrow{P}\; \sigEIF + c_R.
\]
Condition~(i) of Assumption~5.2 ($\sup_n\EE[T_n^2]<\infty$) holds since
$\sigEIF+c_R+o(1)$ is bounded.
\end{proof}

\section{Full Proof of Theorem~5.4 (Jackknife Wald Interval Validity)}

\begin{stheorem}[Jackknife Wald Interval Validity]
Under the conditions of Theorems~5.1 (Jackknife Variance Consistency)
and~5.3 (Bootstrap Wald Interval Validity),
the studentized jackknife pivot
$T_n^{\rm JK} = \sqrt{n}(\Psihat - \Psitrue)/\sqrt{n\VarJK} \Rightarrow N(0,1)$,
and the jackknife Wald interval achieves asymptotically exact coverage.
\end{stheorem}

\begin{proof}
\textit{Step 1: Bootstrap pivot validity.}
By Theorem~5.3 (Bootstrap Wald Interval Validity) and Assumption~7
(Near-Boundary CLT), the bootstrap Wald pivot satisfies
$\sqrt{n}(\Psihat - \Psitrue)/\sqrt{n\VarBoot} \Rightarrow N(0,1)$.

\textit{Step 2: Ratio convergence.}
By Theorem~5.1, $n\VarJK \xrightarrow{P} \sigEIF + c_R$.
By Theorem~5.2, $n\VarBoot \xrightarrow{P} \sigEIF + c_R$.
Therefore $\VarJK/\VarBoot \xrightarrow{P} 1$.

\textit{Step 3: Transfer of validity.}
Write
\[
  T_n^{\rm JK}
  = \frac{\sqrt{n}(\Psihat-\Psitrue)}{\sqrt{n\VarBoot}}
    \cdot \sqrt{\frac{\VarBoot}{\VarJK}}.
\]
The first factor converges in distribution to $N(0,1)$ (Step~1).
The second factor converges in probability to $1$ (Step~2).
Both quantities are measurable functions of the same data $\mathcal{D}_n$,
so Slutsky's theorem gives $T_n^{\rm JK}\Rightarrow N(0,1)$.

\textit{Strong-decay regime ($c_R = 0$).}
Both $n\VarJK$ and $n\VarBoot$ converge to $\sigEIF$;
the argument is identical.
\end{proof}

\section{Secondary Simulation: Strong-Decay Regime (AIPW)}

This section illustrates the strong-decay regime using a non-cross-fitted
augmented inverse probability weighted (AIPW) estimator in an i.i.d.\
observational study.  The purpose is to demonstrate that the variance ratio
$\hat\rho = \VarJK/\VarSand$ decreases to $1.0$ as $n$ grows when nuisance
estimation operates in the strong-decay regime
($n\cdot\mathrm{Var}(\Rrem)\to 0$), in contrast to the persistent
near-boundary behaviour of Table~1 in the main paper.

\subsection*{S4.1\; Data-generating process}

\begin{alignat*}{3}
  W &\sim N(0,1), \quad
  &A \mid W &\sim \mathrm{Bernoulli}(\mathrm{expit}(0.3W)),\\
  Y &= 0.5 + 0.4A + 0.3W + 0.15AW + \varepsilon, \quad
  &\varepsilon &\sim N(0, 0.5^2).
\end{alignat*}
True ATE: $\Psi_0 = 0.4$.  Nuisance models: outcome regression correctly
specified as $\mathrm{lm}(Y \sim A \times W)$; propensity enriched as
$\mathrm{glm}(A \sim W + W^2)$.  No cross-fitting ($V=1$) ensures the
full-data and leave-one-out fits are structurally comparable.  Under
correctly specified parametric nuisance at $O_P(n^{-1/2})$ convergence,
$n\cdot\mathrm{Var}(\Rrem)\to 0$, placing the estimator in the strong-decay
regime (Theorem~4.1 of the main paper).

\subsection*{S4.2\; Results}

\begin{table}[h]
\centering
\caption{AIPW secondary simulation: strong-decay regime.
  $\hat\rho = \bar{V}_{\rm JK}/\bar{V}_{\rm Sand}$ decreases monotonically
  to $1.0$, confirming $n\cdot\mathrm{Var}(\Rrem)\to 0$ under $O_P(n^{-1/2})$
  parametric nuisance.
  CP(Sand) and CP(JK): empirical 95\% CI coverage ($z_{\alpha/2}$ critical values).
  $R = 500$ replicates; MC-SE $\approx 0.010$.}
\label{tab:aipw}
\smallskip
\begin{tabular}{crrrrrr}
\toprule
$n$ & Bias & MCSD & CP(Sand) & CP(JK) & $\hat\rho$ \\
\midrule
 200 & $+0.000$ & $0.070$ & $0.964$ & $0.968$ & $1.040$ \\
 500 & $-0.002$ & $0.045$ & $0.946$ & $0.946$ & $1.014$ \\
1000 & $-0.001$ & $0.031$ & $0.960$ & $0.960$ & $1.007$ \\
2000 & $-0.000$ & $0.023$ & $0.958$ & $0.958$ & $1.003$ \\
\bottomrule
\end{tabular}
\end{table}

$\hat\rho$ decreases monotonically from $1.040$ to $1.003$, consistent
with the sharp sandwich characterization (Theorem~4.1 of the main paper):
under $O_P(n^{-1/2})$ nuisance convergence, $n\VarSand \xrightarrow{P} \sigEIF$
and $n\VarJK \xrightarrow{P} \sigEIF$, so no correction beyond the sandwich
is needed.  All coverage probabilities lie within $\pm 2$ Monte Carlo
standard errors of $0.95$.  This contrasts with Table~1 of the main paper,
where $\hat\rho$ remains above $1.14$ across all $J$ under the structural near-boundary remainder in the primary running example.

\noindent\textit{Note on $V=1$.} No cross-fitting ensures the LOO and full-data
nuisance fits are structurally comparable.  Cross-fitting is standard in
practice but is not needed for the strong-decay conclusion.

\bibliographystyle{plainnat}
\bibliography{references}